\title{\MakeUppercase{Optimally Fast Incremental Manhattan Plane Embedding and Planar Tight Span Construction}\thanks{This work was supported in part by NSF grant 0830403 and by the Office of Naval Research under grant N00014-08-1-1015.}}
\author{David Eppstein\thanks{\affil{Computer Science Department, University of California, Irvine},
    \email{eppstein@uci.edu}}}
\def\R{\ensuremath{\Bbb R}}
\theoremstyle{plain}
\newtheorem{theorem}{Theorem}
\newtheorem{lemma}{Lemma}
\newtheorem{corollary}{Corollary}
\begin{document}
\maketitle   

\begin{abstract}
We describe a data structure, a \emph{rectangular complex}, that can be used to represent hyperconvex metric spaces that have the same topology (although not necessarily the same distance function) as subsets of the plane.
We show how to use this data structure to construct the tight span of a metric space given as an $n\times n$ distance matrix, when the tight span is homeomorphic to a subset of the plane, in time $O(n^2)$, and to add a single point to a planar tight span in time $O(n)$. As an application of this construction, we show how to test whether a given finite metric space embeds isometrically into the Manhattan plane in time $O(n^2)$, and add a single point to the space and re-test whether it has such an embedding in time $O(n)$.
\end{abstract}

\section{Introduction}
Any metric space $(X,d)$, finite or infinite, can be embedded without distortion into a larger metric space $(T_X,d)$ called variously its \emph{tight span}, \emph{injective envelope}, \emph{hyperconvex hull}, or \emph{convex hull}~\cite{ChrLar-Algs-94,Dev-AoC-06,Dre-AiM-84,DreHubMou-DM-01,DreHubMou-AC-01,DreHubMou-AM-02,HerJos-CDM-07,Isb-CMH-64,StuYu-EJC-04}.  The tight span is an invariant of the given metric space~-- isometric metric spaces have isometric tight spans -- and it has properties similar to those of $\ell_\infty$ spaces. In particular, tight spans are \emph{hyperconvex}, meaning that their metric balls satisfy the Helly property: any system of balls that intersect pairwise has a common intersection. Additionally, they are \emph{injective}, meaning that whenever $X$ can be embedded without distortion into a hyperconvex space, the embedding extends to~$T_X$. By means of the tight span, algorithms and mathematical results designed for the simpler case of hyperconvex metric spaces can often be extended to any other kind of metric space.

As well as being a fundamental construction in the theory of metric spaces themselves, tight spans have close connections to several other problems of algorithmic interest:
\begin{itemize}
\item
The $\ell_1$-metric plane, and $\ell_\infty$ spaces of any dimension, are hyperconvex. Therefore, by injectivity, a given metric space embeds isometrically into one of these spaces if and only if its tight span does. As we show, this allows the tight span to be used for efficient tests of two-dimensional $\ell_1$- or $\ell_\infty$-embeddability of finite metric spaces, a problem previously studied by several authors~\cite{AviDez-Nw-91,BanChe-Nw-98,Edm-DCG-08,MalMal-DCG-92}.
\item
Tight spans generalize the concept of an \emph{orthogonal convex hull}, in the sense that, when the input space $X$ is a subset of the $\ell_1$ plane, with its distance function $d$ being the Manhattan distance, then the tight span of $(X,d)$ is essentially the same as the orthogonal convex hull of $X$. (For a more precise statement of this equivalence, see Lemma~\ref{lem:L1-span}.)
\item
In biology, tight spans may be used to represent evolutionary distances between species in as tree-like a way as possible: if $(X,d)$ is the set of distances within a tree, then the tight span is a topological complex of points and line segments in the structure of a tree, with the same pairwise distances. For distances that are a small perturbation from a tree metric, the tight span provides a space that still retains tree-like properties but that does not arbitrarily single out any specific tree among the family of all trees consistent with the data~\cite{DreHubMou-DM-01,DreHubMou-AM-02}; indeed, any \emph{Steiner tree} whose path lengths dominate the distances in the input metric space but are otherwise as short as possible may be embedded isometrically into the tight span.
\item
The \emph{$k$-server problem} is a standard model of online decision-making in which an algorithm is presented with an online sequence requests from points in a metric space, and must field the requests by moving one of the servers to the requested point, while minimizing the total distance that the servers move. Although a $k$-server algorithm needs only to move a single server discretely to the requested point, it is a convenient conceptual device to view all servers as moving simultaneously and continuously towards the request point until one of them reaches that point and serves the request. Adopting this point of view, and using the tight span as the metric space within which the continuous motion of the servers takes place, has led to advances in online $k$-server algorithms~\cite{ChrLar-SJDM-91,ChrLar-Algs-94}.
\item
In facility location, the points of the tight span represent distance vectors from a single facility to each of the locations in a given input metric space, such that the distance between any two input locations is less than the sum of their distances to the facility, and that no other distance vector with the same property has smaller distances to each location. Problems of optimal facility location such as the problem of determining a star-topology network that minimizes the sum of the distances to the facility or that minimizes the dilation of the path between any pair of locations may therefore be formulated as a search for the optimal point in the tight span~\cite{EppWor-WADS-09}.
\item
Several recent algorithmic works in computational topology have involved two simplicial complexes defined from a set of points and a distance threshold (radius), the {\v C}ech complex and the Vietoris-Rips complex~\cite{ChaEriWor-SoCG-08,Ghr-BAMS-08,GhrMuh-IPSN-05,Zom-CG-10}. The {\v C}ech complex has a simplex for every set of balls of the given radius that have a common intersection, whereas the Vietoris-Rips complex has a simplex for every set of balls that pairwise intersect; because this pairwise intersection property is based only on distances between points, and not on the geometry of the space they embed into, it is easier to compute and may be applied even when only distances and not coordinates are known, as may happen in a sensor network. The tight span unifies these two concepts by providing a geometric space into which a given distance matrix can be embedded, within which the {\v C}ech complex and the Vietoris-Rips complex coincide: by hyperconvexity, any pairwise-intersecting set of balls has a common intersection.
\end{itemize}

Because of these applications, it is of interest to design algorithms that can take as input a metric space with finitely many points (described by an $n\times n$ distance matrix) and produce as output a concise representation of its tight span. And indeed, the tight span of any finite metric space may be represented in a combinatorial way, and constructed algorithmically, as the set of bounded faces of an $n$-dimensional intersection of halfspaces~\cite{Dre-AiM-84,StuYu-EJC-04}. However, for distance matrices satisfying a general position assumption, the largest dimension of these faces is high, between $n/3$ and $n/2$~\cite{Dev-AoC-06}, and the tight span has exponential worst-case complexity~\cite{HerJos-CDM-07}. Additionally, although there has been some research on methods for computing the bounded faces of a halfspace intersections without wasting time on the unbounded faces~\cite{EppLof-SoCG-11,HerJosPfe-10}, it is not known how to solve this problem with an output-sensitive polynomial time bound given only the halfspaces as input. Therefore, the applications of this polyhedral representation of tight spans are limited to spaces with very small numbers of points. It would be of interest to find special cases of finite metric spaces for which the tight span can be represented and constructed more efficiently, avoiding the limitations of the polyhedral approach.

As we show in this work, one such special case arises when the tight span is \emph{planar} (topologically equivalent to a subset of the plane). Although the metrics with planar tight spans cannot be in general position (for $n>6$), they include a wide class of examples including the distance metrics of unweighted \emph{squaregraphs}, the planar graphs in which every bounded face is a quadrilateral and every vertex that is not on the unbounded face has degree at least four~\cite{BanCheEpp-SJDM-10,Epp-TA-09}. As we show, for metrics that have a planar tight span, the tight span can be represented and constructed much more efficiently than the bounds for the general-purpose halfspace intersection algorithm would suggest.

\subsection{New results}
We show the following results:
\begin{itemize}
\item We show how to represent planar tight spans as \emph{rectangular complexes} formed by gluing together points, line segments, and $\ell_1$ rectangles. Our representation is suitable for use as a data structure in algorithms involving tight spans, and can be augmented to allow fast distance queries.
\item We describe an algorithm that, given an $n$-point metric space with a planar tight span, constructs a complex representing its tight span in $O(n^2)$ time. Our algorithm is \emph{incremental} in the sense that a single additional point can be added to the metric space, and the tight span augmented to include it, in $O(n)$ time.
\item As an illustration of the power of tight spans for algorithmic problems on metric spaces, we show how to embed any finite metric space into the Manhattan plane\footnote{The $\ell_1$ and $\ell_\infty$ planes may be mapped isometrically into each other by a transform that scales the plane by a $\sqrt 2$ factor and rotates it by a $45^\circ$ degree angle; when the distinction between these two metrics is unimportant we call them interchangeably the \emph{Manhattan plane}.} without distortion (if such an embedding exists) in $O(n^2)$ time. The main idea is to find a rectangular complex representing the planar tight span of the input metric space and then, once the tight span has been constructed, test whether the rectangular complex representing it obeys a set of simple local conditions on its structure. As we show, these local conditions are necessary and sufficient to allow the complex, and therefore also the input metric space, to be embedded into the Manhattan plane.
As with our tight span algorithm, this algorithm is incremental: any single point can be added to the metric space, and the embedding updated, in $O(n)$ time.
\end{itemize}

Our algorithms are optimal: their quadratic running time matches the size of the input distance matrix.
In order to avoid issues of numerical precision we assume a model of computation in which distances may be computed as exact numbers, and in which simple arithmetic formulas and comparisons involving distances may be computed exactly. This model is reasonable for the classes of inputs known to have planar tight spans, such as unweighted graph distance metrics in certain graphs. This model would not be appropriate, however, for situations in which the distances come from physical measurements. It would be of interest to determine, given an inexact distance matrix, whether there exists a planar hyperconvex metric space that closely approximates but does not exactly match its distances, generalizing the work of B{\u a}diou on approximate Manhattan plane embedding~\cite{Bad-SODA-03}, but we leave this problem of constructing approximate planar tight spans for future work.

\subsection{Related work}
In previous work~\cite{Epp-TA-09} we proved the hyperconvexity of a class of metric spaces that are homeomorphic to subsets of the plane, and we showed that several interesting graph-based metrics have tight spans that belong to this class. In particular this is true for the metric of unweighted shortest path distances in squaregraphs, as well as for distances in another class of graphs, derived from certain planar graphs by replacing their faces by cliques. However, we did not show that the class of spaces we studied was adequate to represent all planar tight spans, we did not provide algorithms for constructing the tight spans in the cases we studied, and we did not characterize the graphs with planar tight spans.

Our $O(n^2)$ time algorithm for two-dimensional $\ell_1$ and $\ell_\infty$ embedding improves a bound of $O(n^2\log^2 n)$ for the same problem by Edmonds~\cite{Edm-DCG-08}, which in turn improved previous bounds~\cite{MalMal-DCG-92}. Subsequently to our initial announcement of the research presented here, Catusse, Chepoi and Vax\`es~\cite{CatCheVax-TCS-11} modified Edmonds' algorithm to solve the Manhattan plane embedding problem directly in $O(n^2)$ time. However, their work does not solve the more general tight span construction problem that we do, nor does their algorithm have the incremental properties of ours.

The work presented here raises the question of whether there is a more general relation between dimension and complexity in tight spans: if the tight span consists entirely of low-dimensional features, does this imply an improved bound on the total complexity of its features? And if so, can we use this reduced complexity to construct tight spans more quickly? In recent work with Maarten L\"offler~\cite{EppLof-SoCG-11}, we were able to use the polyhedral approach to tight spans to show that the answer to both questions is yes. Whenever the dimension of the features in the tight span is bounded by a fixed constant, the tight span has complexity bounded by a fixed polynomial of its number of points and can be constructed in polynomial time. However, the exponent of this polynomial bound grows quadratically in the feature dimension. Finding tighter bounds on the complexity of the tight span as  a function of the dimension remains open.

\subsection{Organization}
The remainder of this paper is organized as follows.
In Section~\ref{sec:prelim} we provide more detailed definitions of hyperconvex metric spaces and tight spans, and we discuss results from our previous work~\cite{Epp-TA-09} on \emph{Manhattan orbifolds}, a class of two-dimensional hyperconvex metric spaces central to the work we present here. Section~\ref{sec:rep} describes \emph{rectangular complexes}, a data structure that is closely related to Manhattan orbifolds, but that provides a concrete representation of a two-dimensional hyperconvex metric space. In the same section we show how to augment these complexes with additional data that allows distance queries to be answered efficiently. In Section~\ref{sec:construct} we describe an incremental algorithm for building a rectangular complex that represents the tight span of a metric space; this algorithm succeeds whenever the given metric space has a planar tight span. In Section~\ref{sec:mpe} we characterize the rectangular complexes that may be embedded isometrically into the Manhattan plane, and we use this characterization together with our incremental tight span algorithm to provide an efficient algorithm for embedding a given distance matrix into the Manhattan plane, whenever an isometric embedding exists. We conclude with a discussion of our results in Section~\ref{sec:discuss}.

\section{Preliminaries}
\label{sec:prelim}
In order to provide a self-contained description of our work, we begin with a review of known material concerning hyperconvex metric spaces, tight spans, and Manhattan orbifolds.

\subsection{Hyperconvex metric spaces}

\begin{figure}[t]
\centering\includegraphics[width=5in]{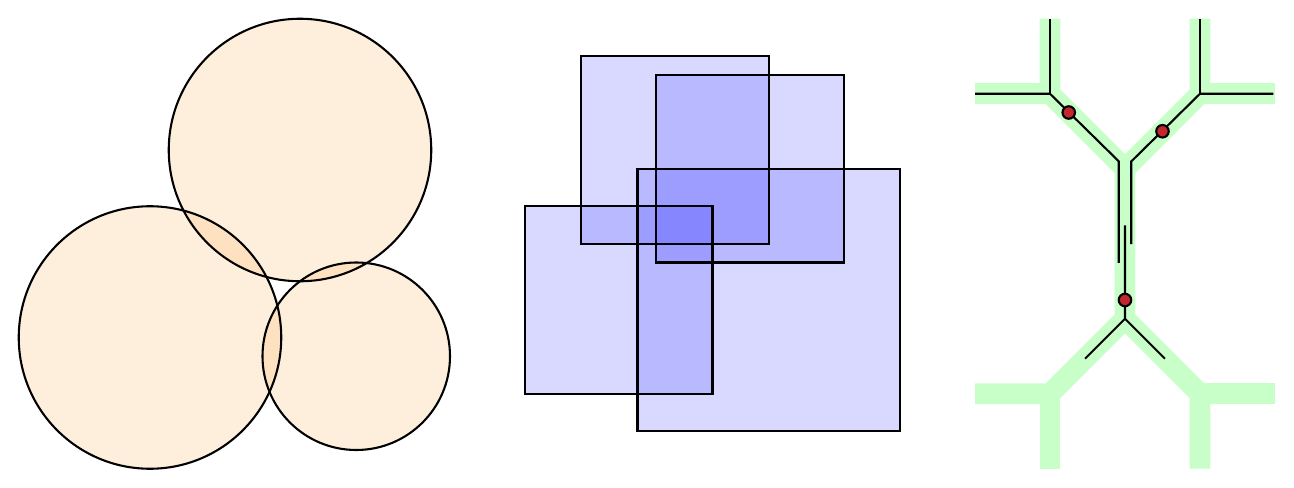}
\caption{In the Euclidean plane (left), balls may intersect pairwise without having a common intersection, but balls in the $\ell_\infty$ plane (center) and balls in real trees (right) have the Helly property.}
\label{fig:hyperconvex}
\end{figure}

A \emph{metric space} is a pair $(X,d)$, where $X$ is a set of \emph{points} and $d$ is a \emph{distance function}, a function from pairs of points to real numbers that is positive ($d(x,y)\ge 0$ with equality only for $x=y$), symmetric ($d(x,y)=d(y,x)$), and satisfies the triangle inequality ($d(x,y)+d(y,z)\ge d(x,z)$). Algorithmically, an arbitrary finite metric space with $n$ points may be represented by an $n\times n$ distance matrix or by shortest path lengths in a weighted undirected graph; in this paper, we generally prefer a distance matrix representation.

An \emph{isometry} from a metric space $(X,d)$ to a subset of another metric space $(X',d')$ is a function $f$ from $X$ to $X'$ such that, for all $x$ and $y$ in $X$, $d(x,y)=d'(f(x),f(y))$. A \emph{geodesic} in a metric space is the isometric image of a Euclidean line segment; a metric space is \emph{path-geodesic} if every two points are the endpoints of a geodesic.

A family $\mathcal{F}$ of sets has the \emph{Helly property} if every pairwise-intersecting subfamily has a common intersection. That is, if $\mathcal{S}\subset\mathcal{F}$ is a subfamily of $\mathcal{F}$, and every two sets in $\mathcal{S}$ have a nonempty intersection, it must be the case that there exists a point belonging to all sets in $\mathcal{S}$. A \emph{hyperconvex space} is a path-geodesic metric space in which the family of all closed balls $B_r(x)=\{y\mid d(x,y)\le r\}$ has the Helly property.
The Euclidean plane is not hyperconvex: three disks may intersect pairwise without having a common intersection (Figure~\ref{fig:hyperconvex}, left). However, $\ell_\infty$ spaces are hyperconvex (Figure~\ref{fig:hyperconvex}, center), as is the $\ell_1$ plane. Another familiar example of a hyperconvex space is a \emph{real tree}, a topological representation of a tree as a complex of points (vertices of the tree) and line segments (edges of the tree), with the property that any two points are connected by a unique geodesic, as depicted in Figure~\ref{fig:hyperconvex}, right).

\subsection{The tight span}
\label{sec:tight-span}

As Isbell~\cite{Isb-CMH-64} first showed,
any metric space $(X,d)$ (finite or infinite) may be embedded isometrically into a minimal hyperconvex metric space $T_X$ via a construction now known variously as the \emph{tight span}, \emph{injective envelope}, or \emph{hyperconvex hull}; although there are different ways of describing the tight span, they are equivalent in the sense that they give isometric metric spaces.

One way to construct the tight span is to let the set of points of $T_X$ be the set of all functions from $X$ to the set $\R$ of real numbers that obey the following two requirements:
\begin{itemize}
\item For each $x$ and $y$ in $X$, and for each $f\in T_X$, $f(x)+f(y)\ge d(x,y)$, and
\item For each $x$ in $X$, and for each $f\in T_X$, $\inf_{y\in X} f(x)+f(y)-d(x,y)=0$.
\end{itemize}
We may think of $f(x)$ as the distance from $f$ to $x$; with this interpretation, the first requirement is just the triangle inequality (the distance from $x$ to $f$ added to the distance from $f$ to $y$ is no shorter than the direct distance from $x$ to $y$); a special case of this first requirement, for $x=y$, forces all function values to be non-negative. The second requirement forces functions in $T_X$ to be minimal, in the sense that no function value $f(x)$ can be decreased without violating the triangle inequality.

As well as constructing the set of points in $T_X$, we need to define their distances. For the construction above, distance between points in $T_X$ is measured using the $\ell_\infty$ metric or sup-norm:
\[
d(f,g)=\sup_{x\in X} |f(x)-g(x)|.
\]
If $y$ is chosen arbitrarily from $X$, then it follows from the two requirements on functions in $T_X$ that, for any $x\in X$, $|f(x)-g(x)|\le f(y)+g(y)$.
Therefore, the supremum in this distance formula is bounded, and well-defined even when $X$ is infinite.

We may embed $(X,d)$ into $F$, by mapping any point $x$ to the function $f_x$ defined by the equation $f_x(y)=d(x,y)$. 
As shown by Isbell~\cite{Isb-CMH-64}, this embedding is isometric and $T_X$ is hyperconvex. More, if $X$ has an isometry into any hyperconvex metric space $(X',d')$, then that isometry can be extended to an isometry of $T_X$ into $(X',d')$, so in some sense $T_X$ is the smallest possible hyperconvex metric space into which $(X,d)$ can be isometrically embedded.

\begin{figure}[t]
\centering
\begin{minipage}[b]{0.3\linewidth}
\centering
\includegraphics[width=2in]{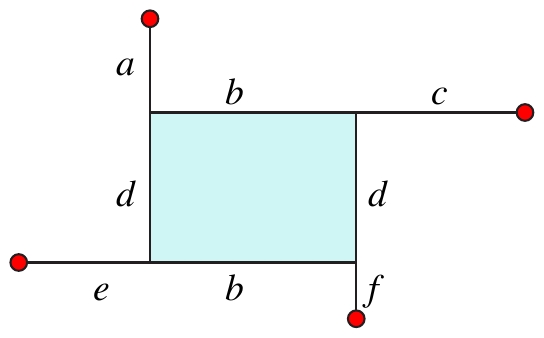}
\end{minipage}
\hspace{0.03\linewidth}
\begin{minipage}[b]{0.62\linewidth}
\bigskip{\small
\[
\left[\begin{array}{cccc}
0&a+b+c&a+b+d+f&a+d+e\\
a+b+c&0&c+d+f&b+c+d+e\\
a+b+d+f&c+d+f&0&b+e+f\\
a+d+e&b+c+d+e&b+e+f&0\\
\end{array}\right]
\]}
\vspace{0.25cm}
\end{minipage}
\caption{The tight span for metric spaces with four points can, in general, be represented as an axis-aligned rectangle in the $\ell_1$ plane, with line segments attached to its corners (left). On the right is the distance matrix for the four red points in this example.}
\label{fig:4-point-span}
\end{figure}

This construction of the tight span as a space of functions is general enough to apply to any metric space, but too general to be convenient algorithmically. For example, for metric spaces with exactly four points, the space of functions from $X$ to $\R$ is four-dimensional, and the tight span as constructed above is a subset of $\ell_\infty^4$; however, in this case a simpler two-dimensional representation of the tight span can always be found, in the form of an axis-aligned rectangle in the $\ell_1$ plane with line segments connecting its corners to the four points of the input metric space. This construction is depicted in Figure~\ref{fig:4-point-span}. (For four-point metric spaces that are not in general position, the rectangle or line segments may be degenerate, and this representation is not uniquely defined up to congruence.) Distances within this shape are inherited from the $\ell_1$ distance function in the plane containing the shape, and are isometric to the distances within the function-space definition of the tight span.
As in this example, in our algorithms we will  represent tight spans not as sets of functions, but rather as complexes of geometric objects (points, line segments, and rectangles) together with a definition of the distance between any two points in these complexes.

In order to show that our complexes correctly represent the tight span of the input metric space, we will show that they are hyperconvex, that they contain an isometric copy of the input space, and that, for each point $y$ in the complex, the function $f_y(x)=d(y,x)$ is minimal in the sense of the second requirement on the function spaces given above. If the space we construct is hyperconvex and contains the input space, then by the injectivity property of tight spans it must be a superset of the tight span. And if the space we construct has only minimal distance functions, it must be a subset of the tight span. Therefore, if we can show that the complexes we construct have all of these properties, they must be isometric to the tight span.

\subsection{Unique injectivity of one-point extensions of tight spans}

If $(X,d)$ is a metric space, and $Y\subset X$, then by the injective property of tight spans,
$T_Y$ can be embedded isometrically into $T_X$ in such a way that the embedding is the identity on the points of $Y$. However, this embedding may not be uniquely defined. For instance, if four points $abcd$ form a rectangle in the $\ell_1$ plane, their tight span is that rectangle, but the tight span of two opposite corners of the rectangle (a geodesic) may be embedded isometrically into the rectangle as any of infinitely many different monotone paths.

As we now show, when $X$ and $Y$ differ only in a single point, then this pathological behavior cannot occur: in this case $T_Y$ has a unique isometric embedding into $T_X$.

\begin{lemma}
\label{lem:unique-extension}
Let $(X,d)$ be any metric space, and $Y=X\setminus\{y\}$. Then there exists a unique isometric embedding from the tight span $T_Y$ of $Y$ to the tight span $T_X$ of $X$ that maps $Y$ to the image of $Y$ in $T_X$.
\end{lemma}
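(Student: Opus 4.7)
The plan is to exhibit a concrete formula for the embedding $\phi: T_Y\to T_X$, verify that it does what is claimed, and then show that the same formula is forced on any candidate isometric embedding.

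\medskip
\noindent\textbf{Step 1 (construction).} For $g\in T_Y$, define $\phi(g):X\to\R$ by
\[
\phi(g)(x)=g(x) \text{ for } x\in Y, \qquad \phi(g)(y)=\sup_{w\in Y}\bigl(d(y,w)-g(w)\bigr).
\]
The supremum is finite because $d(y,w)-g(w)\le d(y,w_0)+d(w_0,w)-g(w)\le d(y,w_0)+g(w_0)$ for any fixed $w_0\in Y$, using the triangle inequality for $d$ and the fact that $g\in T_Y$ implies $g(w_0)+g(w)\ge d(w_0,w)$.

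\medskip
\noindent\textbf{Step 2 (membership in $T_X$).} I would check the defining properties of $T_X$ for $\phi(g)$ one at a time. The triangle inequalities involving two points of $Y$ reduce to $g\in T_Y$; the inequality $\phi(g)(y)+g(z)\ge d(y,z)$ for $z\in Y$ is immediate from the definition of the supremum; and the minimality condition $\inf_{x\in X}\bigl(\phi(g)(z)+\phi(g)(x)-d(z,x)\bigr)=0$ for $z\in Y$ follows because restricting the infimum to $x\in Y$ already yields $0$ by the minimality of $g$. For $z=y$ the minimality condition rewrites to $\phi(g)(y)=\sup_{x\in Y}\bigl(d(y,x)-\phi(g)(x)\bigr)$, which is exactly our definition. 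The one delicate point is $\phi(g)(y)\ge 0$ (which is needed both to get $2\phi(g)(y)\ge d(y,y)=0$ and to secure non-negativity of all function values): given any $w_0\in Y$, either $d(y,w_0)\ge g(w_0)$ and we are done, or $d(y,w_0)<g(w_0)$, in which case applying the minimality of $g$ at $w_0$ gives a sequence $w_n$ with $g(w_0)+g(w_n)-d(w_0,w_n)\to 0$, and the triangle inequality $d(y,w_n)\ge d(w_0,w_n)-d(y,w_0)$ gives $d(y,w_n)-g(w_n)\ge g(w_0)-d(y,w_0)-o(1)>0$ for large $n$.

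\medskip
\noindent\textbf{Step 3 (isometry and that $Y$ goes to $Y$).} For $z\in Y$, take $g=f^Y_z$, where $f^Y_z(w)=d(z,w)$; the triangle inequality $d(y,w)-d(z,w)\le d(y,z)$ with equality at $w=z$ shows $\phi(f^Y_z)(y)=d(y,z)$, so $\phi(f^Y_z)=f_z$. For the isometry, $d_{T_X}(\phi(g_1),\phi(g_2))=\sup_{x\in X}|\phi(g_1)(x)-\phi(g_2)(x)|$ splits into the $Y$-part, which equals $d_{T_Y}(g_1,g_2)$, and the single term at $y$, which is bounded above by $\sup_{w\in Y}|g_1(w)-g_2(w)|=d_{T_Y}(g_1,g_2)$ via the elementary inequality $|\sup a-\sup b|\le\sup|a-b|$. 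Hence the supremum at $y$ never dominates and the two distances agree.

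\medskip
\noindent\textbf{Step 4 (uniqueness).} Let $\psi:T_Y\to T_X$ be any isometric embedding sending each $f^Y_z$ to $f_z$. For $z\in Y$, the identity $\psi(g)(z)=d_{T_X}(\psi(g),f_z)=d_{T_Y}(g,f^Y_z)=g(z)$ (the last equality being a standard fact about tight spans) forces $\psi(g)|_Y=g$. Then $\psi(g)\in T_X$ plus the triangle inequality in $T_X$ at $y$ forces $\psi(g)(y)\ge d(y,w)-g(w)$ for every $w\in Y$, while the minimality condition at $y$ forces equality in the supremum; hence $\psi(g)(y)=\sup_{w\in Y}(d(y,w)-g(w))=\phi(g)(y)$, so $\psi=\phi$.

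\medskip
The main obstacle is the non-negativity $\phi(g)(y)\ge 0$ in Step~2; everything else is essentially unpacking definitions, but this positivity is where the minimality property of $g$ in $T_Y$ has to be invoked in a genuine way rather than as a triviality.
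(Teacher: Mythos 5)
Your proof is correct and takes essentially the same route as the paper: the paper's one-line argument simply asserts that any extension of $f\in T_Y$ to a function on $X$ satisfying the tight-span constraints is forced to set $f(y)=\sup_{x}\bigl(d(x,y)-f(x)\bigr)$, which is exactly your $\phi$. You additionally supply the verifications the paper leaves implicit (membership in $T_X$, the non-negativity of $\phi(g)(y)$ via minimality, the isometry estimate, and the uniqueness argument via $d_{T_X}(f,f_z)=f(z)$), and these are all sound.
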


\begin{proof}
As in the previous section, represent each point of $T_Y$ as a function $f$ from $Y$ to $\R$ satisfying certain constraints, and represent each point of $T_X$ similarly. Then an embedding from $T_Y$ to $T_X$ may be specified by determining, for each function $f$ in $T_Y$, the value to assign to $f(y)$ to extend $f$ into a function in $F_X$. However, there is only one way to choose $f(y)$ in order to satisfy the constraints defining $T_X$: we must set $f(y)=\sup_{x\in X} d(x,y)-f(x)$.
\end{proof}

\subsection{Manhattan orbifolds}
\label{sec:mo}

The $\ell_\infty$ plane (or, isometrically, the $\ell_1$ plane) is a two-dimensional hyperconvex space, but not the only possible such space.
In previous work~\cite{Epp-TA-09} we defined a broader class of two-dimensional hyperconvex metric spaces,  \emph{Manhattan orbifolds}. In Manhattan orbifolds, every point except for a discrete set of \emph{singularities} looks locally like a point in a subset of the Manhattan plane.
For instance, the subset of the $\ell_\infty$ plane consisting of the points on or inside a Euclidean circle (Figure~\ref{fig:manhattan}, left) forms a Manhattan orbifold.

\begin{figure}[t]
\centering\includegraphics[width=5in]{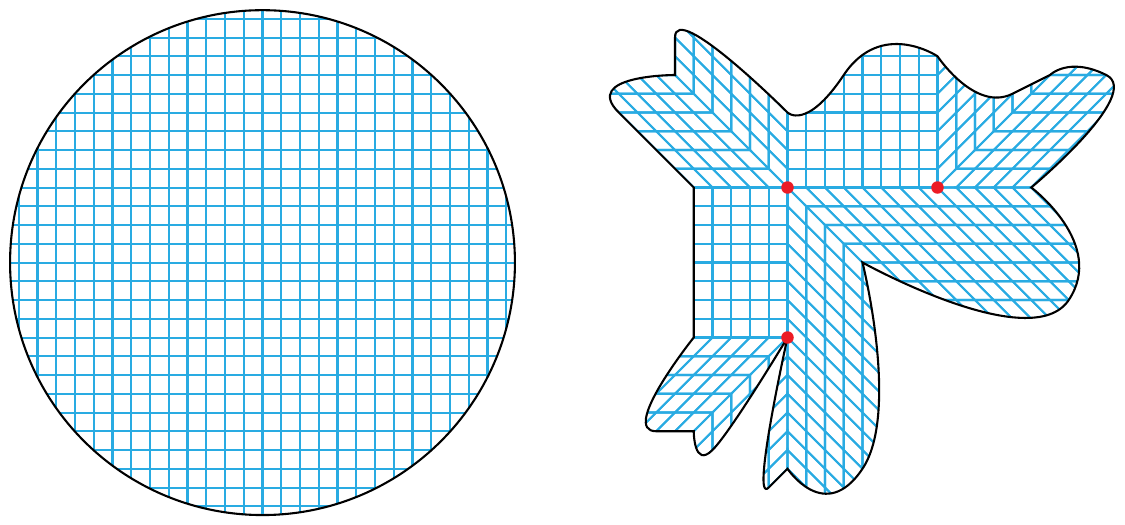}
\caption{Two examples of Manhattan orbifolds: the subset of the $\ell_\infty$ plane inside a circle (left), and a more complicated example (right) with two cone points of orders six and five, and one cone inflection point (the three red circles). The small blue squares and parallelograms mark regions that, within the intrinsic geometry of the Manhattan orbifold, are isometric to small squares in the $\ell_\infty$ plane.}
\label{fig:manhattan}
\end{figure}

The precise definition of a Manhattan orbifolds incorporates two different types of constraints, topological and metrical. Topologically, a Manhattan orbifold is required to be a 2-manifold with boundary, and every simple closed curve must form the boundary of a unique disk.
Metrically, a Manhattan orbifold must be Cauchy complete: if $p_0, p_1, \dots$ is a sequence of points such that, for every $\epsilon>0$, all but finitely many points are within distance $\epsilon$ of each other, then this sequence must have a limit point. Manhattan orbifolds must be path-geodesic: for every two points, the space contains a curve connecting them, with length equal to the distance between the points. And finally, every point in the space must have a neighborhood of one of five types, whose local geometry is modeled after the geometry of points in the $\ell_\infty$ plane:
\begin{itemize}
\item A point is an \emph{ordinary point} if it has a neighborhood that is isometric to a neighborhood of the origin in the $\ell_\infty$ plane. For instance, in the example of a circular subset of the $\ell_\infty$ plane, all of the points interior to the circle are ordinary points.
\item A point $p$ in a Manhattan orbifold is a \emph{boundary-geodesic point} if there is a region $R$ of the $\ell_\infty$ plane, bounded by a smooth curve, and a point $p'$ on the boundary of $R$, such that some neighborhood of $p$ in the orbifold is isometric to a neighborhood of $p'$ in $R$, and such that the slope of the boundary curve of $R$ near $p'$ has constant sign. For instance, in the example of a circular subset of the $\ell_\infty$ plane, all but four of the points on the circle (the four points where the circle has a horizontal or vertical tangent line) are boundary-geodesic.
\item A point $p$ is an \emph{Inflection point} if (like a boundary-geodesic point) it has a neighborhood that is isometric to the neighborhood of a point $p'$ on the boundary of a region $R$ in the $\ell_\infty$ plane bounded by a simple closed curve, but the boundary curve of $R$ near $p'$ is not smooth or has both points of both positive negative slope.  Within the given neighborhood, the other points are required to be either ordinary or boundary-geodesic. For instance, in the example of a circular subset of the $\ell_\infty$ plane, the four extreme points of the circle in each axis-parallel direction are inflection points.
\item \emph{Cone points} of order $k$ (for $k\ge 5$)  have a neighborhood isometric to the neighborhood of the origin in a metric space formed by gluing together $k$ quadrants of the Manhattan plane. In Figure~\ref{fig:manhattan}, right, the top left red circle marks a cone point of order six, and the top right red circle marks a cone point of order five.
\item \emph{Cone inflection points} have a neighborhood isometric to the neighborhood of the origin in a region of the order-$k$ rectilinear cone bounded by a curve. All other points in the neighborhood are required to be ordinary or boundary-geodesic. Figure~\ref{fig:manhattan} (right) has one cone inflection point, the lower red circle; the other points on the boundary of the figure are all either boundary-geodesic or inflection points.
\end{itemize}

A Manhattan orbifold, then, is defined to be a path-geodesic and Cauchy complete metric space, forming a 2-manifold with boundary in which every simple closed curve bounds a unique disk, such that every point in the space is one of these five types. The main result of our previous work~\cite{Epp-TA-09} is the following:

\begin{lemma}
\label{lem:manhattan-hyperconvex}
Every Manhattan orbifold is hyperconvex.
\end{lemma}

We will use this fact to prove the hyperconvexity of the rectangular complexes that we use to represent planar tight spans. The definition of rectangular complexes in Section~\ref{sec:rectplex} is modeled after that for Manhattan orbifolds, in that we require them to obey similar topological conditions and we impose similar constraints on the local geometry of the neighborhoods of points where rectangles are glued together. However, we will allow two-dimensional parts of the complex to be connected to each other by one-dimensional \emph{bridges} and zero-dimensional \emph{articulation points}, so that the whole complex will not necessarily form a 2-manifold with boundary. Additionally, although Manhattan orbifolds may have arbitrary piecewise-smooth boundary curves, we will require our rectangular complexes to have boundaries that are piecewise linear of restricted slopes ($\pm 1$ in the $\ell_\infty$ plane or axis-parallel in the $\ell_1$ plane).

We also need a technical lemma, adapted from a lemma in the same paper:

\begin{lemma}
\label{lem:four-corners}
Let $P$ be a polygonal simple closed curve in a Manhattan orbifold such that all sides of $P$ are parallel to a coordinate axis of the local $\ell_1$ geometry of the orbifold.
Then $P$ has at least four vertices with interior angle $\pi/2$.
\end{lemma}

\begin{proof}
Lemma~2 of~\cite{Epp-TA-09} is the same statement for polygons with sides parallel to the coordinate axes of the local $\ell_\infty$ geometry (equivalently, with slope $\pm 1$ in the local $\ell_1$ geometry). The same proof applies here: if $P$ is a simple closed curve as described in the lemma, we could glue two copies of the disk bounded by $P$ together along their boundary and replace the $\ell_1$ geometry of the copies by $\ell_2$ geometry, forming a surface that is topologically equivalent to a sphere, with a geometry that is locally Euclidean except at the points of singularity within $P$ and the vertices of $P$. At each point of singularity or vertex of $P$, we may define the \emph{angular defect} as the difference between $2\pi$ and the sum of the angles of the quadrants that meet at that point in terms of its local $\ell_2$ geometry. By Lemma~1 of~\cite{Epp-TA-09}, there are finitely many singularities within $P$. A form of the Gauss--Bonnet theorem states that the total angular defect of a two-dimensional manifold that (like the one here) is locally Euclidean except at a finite number of cone points is $2\pi$ times its Euler characteristic~\cite{Sch-MS-11}. Because the surface formed from two copies of $P$ is topologically a sphere, it has Euler characteristic~2 and total angular defect $4\pi$. A vertex of $P$ with interior angle $\pi/2$ leads to a point with only two quadrants in the glued surface (one for each copy of the disk), so it has a positive defect of $\pi$; all other vertices and singularities have zero or negative defect. Therefore, in order to achieve a total defect of $4\pi$, at least four vertices of $P$ must have interior angle $\pi/2$.
\end{proof}

\section{Representation by rectangular complexes}
\label{sec:rep}

In this section we define \emph{rectangular complexes} to be a certain kind of complex of points, line segments, and $\ell_1$-geometry rectangles, with constraints on how they are glued together that are analogous to the constraints of Manhattan orbifolds. We show that these complexes always define hyperconvex spaces, and we show how to augment them with additional information that allows distances within them to be calculated quickly.
In later sections we will show that any planar tight span of a finite point set can be represented by a complex of this type.

\subsection{Rectangular complexes}
\label{sec:rectplex}

We define a \emph{rectangular complex} to be a finite set of objects of three types: vertices, edges, and faces.
\begin{itemize}
\item
Each face of a rectangular complex represents a metric space with the geometry of an axis-aligned rectangle in the $\ell_1$ plane. Each of the four boundary sides of the rectangle either consists of a single edge, or is partitioned into
multiple edges.
\item
Each edge of a rectangular complex represents a metric space with the geometry of a line segment. An edge may lie on the boundaries of one or two faces; it is a \emph{bridge} if it does not lie along not the boundary of any face. Each edge has two endpoints, which are vertices of the complex. The total length of the set of edges along the side of any one of the faces of the complex must equal the length of that side.
\item
Each vertex of a rectangular complex represents a single point in a metric space. A vertex may be the endpoint of one or more edges, and it may lie on the boundaries of one or more faces.
\end{itemize}

\begin{figure}[t]
\centering\includegraphics[width=3.5in]{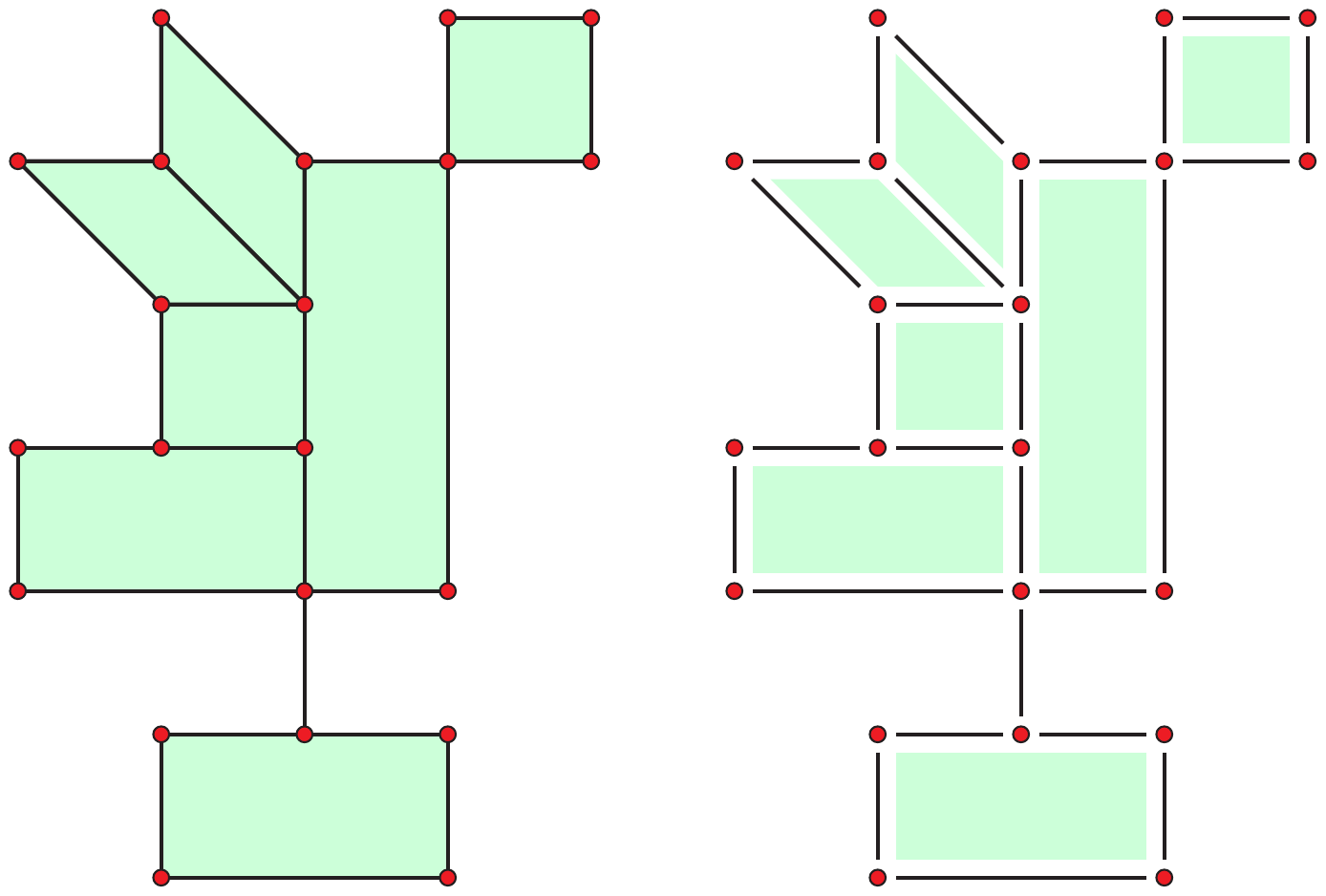}
\caption{An example of a rectangular complex, shown glued together (left) and in an exploded view with separate vertices, edges, and faces (right). This complex has 21 vertices, 27 edges, seven faces, three articulation points, and one bridge. Two of the faces are drawn as parallelograms, but should be interpreted as having rectangular geometry as part of the complex. There are two interior vertices; the upper one is a cone point of order~5, and the lower one is an ordinary point.}
\label{fig:rectangular}
\end{figure}

We say that an edge is incident to the faces for which it is part of the boundary, and to the vertices at its endpoints; a vertex is incident to a face if there is an edge incident to both the face and the vertex.
In order to ensure that a rectangular complex has the topology of a subset of the plane, and that its two-dimensional parts form Manhattan orbifolds, we require that at every vertex the incident faces have one of two possible structures:
\begin{itemize}
\item The incident edges and faces may form a single cycle, alternating between edges and rectangles, such that each edge in the cycle lies on the boundaries of the two rectangles that are adjacent to it in the cycle. In this case, topologically, the vertex has a neighborhood homeomorphic to an open disk in the plane. We define the \emph{angle} of an incident rectangle to be $\pi/2$ if the vertex is one of the four corners of the rectangle, and $\pi$ if the vertex instead lies on one of the rectangle's sides; we require that the total angle of the rectangles incident with the vertex be at least $2\pi$. In the terminology of Section~\ref{sec:mo}, the vertex must be either an ordinary point (with total angle $2\pi$) or a cone point (with total angle greater than $2\pi$).
\item Alternatively, it may be possible to group the incident edges and rectangles into one or more subsets, such that within each subset there is a sequence of alternating edges and rectangles that starts and ends on an edge and such that each edge in the sequence lies on the boundary of the adjacent rectangle or rectangles in the sequence. If there is exactly one of these subsets, and it includes at least one rectangle, then the vertex automatically forms a boundary-geodesic point, inflection point, or cone inflection point according to the local geometry of that subset. However, it is also possible for a vertex to be incident to a single bridge (in which case we call it a \emph{leaf}) or to be incident to more than one subsequence of alternating edges and rectangles (in which case we call it an \emph{articulation point}).
\end{itemize}

Additionally, as we did with Manhattan orbifolds, we require that any simple closed curve in the topological space obtained from the complex be the boundary of a unique disk, and that the space be connected. Figure~\ref{fig:rectangular} shows an example of a rectangular complex.

When we use rectangular complexes to represent tight spans, we will not require the points of the input metric space to be vertices of the complex. A point of the input metric space may be a vertex, but it may also be an interior point of an edge, or an interior point of a face.

\subsection{Computer representation of a rectangular complex}

In order to represent a rectangular complex within a computer algorithm, we assume an object-oriented representation in which each vertex, each edge, and each face of the complex is represented by an object of an appropriate type. Each of these objects has instance variables describing its connections to the other objects in the complex:
\begin{itemize}
\item
A face of the complex has the geometry of an $\ell_1$ rectangle. Associated with the object representing the face, we store the lengths of the sides of this rectangle, and, for each of the four rectangle sides, a partition of that side into a sequence of edges (represented as a sequence of pointers to edge objects).
\item
An edge of the complex has the geometry of a line segment.
Associated with the object representing the edge, we store its length, two pointers to the objects representing the vertices at its endpoints, and from zero to two pointers to the objects representing the faces that it forms part of the boundary of.
\item
A vertex of the complex has the geometry of a point.
Associated with the object representing the vertex, we store an unordered collection of the edges for which it is the endpoint.
\end{itemize}

In Section~\ref{sec:dist} we describe additional information that we associate with each of these types of object in order to compute distances quickly.

\subsection{Hyperconvexity of rectangular complexes}

We define the distance between two points in a rectangular complex to be the length of the shortest curve that connects them, where the length of a curve is defined to be the sum of its lengths within each of the faces and edges that it crosses, using the $\ell_1$ distance to measure lengths within each face. This distance function is well-defined by the assumption that rectangular complexes are connected. As we now show, this distance is hyperconvex.

To begin with, we classify pieces of the complex at a more coarse level of refinement than the level of faces and edges. We define a \emph{block} of a rectangular complex to be a maximal set of edges and faces that can be reached from each other through face-edge incidences, together with all of the vertices incident to these edges and faces. A vertex that is incident to two or more blocks is an \emph{articulation point}

\begin{lemma}
\label{lem:manhattan-block}
Each block must either be a bridge or a Manhattan orbifold.
\end{lemma}

\begin{proof}
If an edge does not lie on the boundary of any faces, it forms a bridge.
All remaining blocks include at least one face.

A block is path-geodesic by the definition of the distance function, and it is Cauchy-complete because it is a finite union of spaces that are themselves Cauchy-complete (the faces, edges, and vertices of the complex). The requirement that at most two rectangles share a boundary edge, and the restrictions on the way that rectangles and edges can be glued together at any vertex, force a block to be a 2-manifold with boundary, and the definition of a rectangle complex includes the constraint that any simple closed curve bounds a unique disk. The only remaining requirements for verifying that a block is a Manhattan orbifold are the ones stating that the points of the block have to have one of five constrained types of neighborhood. We can verify this with a small amount of case analysis:
\begin{itemize}
\item A point that is in the interior of a face of the complex is necessarily an ordinary point.
\item A point that is in the interior of an edge of the complex is either an ordinary point (if the edge lies on the boundary of two faces) or a boundary-geodesic point (if the edge lies on the boundary of a single face).
\item A vertex whose incident faces and edges form a cycle is either an ordinary point (if the total angle is $2\pi$) or a cone point (if the total angle is greater than $2\pi$.
\item At a vertex $v$ whose incident faces and edges do not form a cycle, exactly one of the paths of faces and edges incident to $v$ must belong to the block; for otherwise, a simple closed curve through two different parts of the block incident to $v$ could not bound a disk. If the total angle of these incident faces is $\pi$, the vertex is boundary-geodesic, and if the total angle is $\pi/2$ or $3\pi/2$, it is an inflection point. In the remaining cases, $v$~is a cone inflection point.
\end{itemize}
This case analysis completes the proof that a block that is not a bridge satisfies all of the requirements of a Manhattan orbifold.
\end{proof}

\begin{lemma}
\label{lem:block-tree}
The incidences between blocks and articulation points of the complex form a tree.
\end{lemma}

\begin{proof}
If it contained a cycle, there would exist a simple closed curve through that cycle, which could not be the boundary of a disk in the complex.
\end{proof}

\begin{lemma}
\label{lem:1-sum}
Let $A$ and $B$ be two hyperconvex spaces, and let $A\cap B$ be a single point. Then $A\cup B$ (with the distance function defined by arc length in $A$ added to arc length in $B$) is hyperconvex.
\end{lemma}

\begin{proof}
Let $S$ be a set of metric balls that intersect pairwise. If each ball in $S$ contains the point $A\cap B$, they have a common intersection. Otherwise, at least one of the balls must be entirely within one of the two sets; without loss of generality it is entirely within $A$. Then the restriction of all of the balls to $A$ forms a set of metric balls within $A$, for each ball centered at a point in $A$ remains a ball in $A$ and each ball centered at a point in $B$ restricts to a ball in $A$ centered at $A\cap B$. These balls still must intersect pairwise, for the balls that have points in $B$ all intersect at $A\cap B$ and the remaining pairwise intersections are not affected by the restriction to $A$. Therefore, by hyperconvexity of $A$, $S$ has a common intersection. We have shown that any set of balls in $A\cup B$ with pairwise intersections has a common intersection, so by definition $A\cup B$ is hyperconvex.
\end{proof}

\begin{theorem}
\label{thm:rect-complex-is-hyperconvex}
Any rectangular complex defines a hyperconvex metric space.
\end{theorem}

\begin{proof}
By Lemmas \ref{lem:manhattan-hyperconvex}~and~\ref{lem:manhattan-block}, every block is hyperconvex. By Lemma~\ref{lem:block-tree}, the whole complex may be obtained by gluing together blocks at single points (the articulation points of the complex). But connecting two hyperconvex spaces together by identifying a single point of one with a single point of another preserves hyperconvexity (Lemma~\ref{lem:1-sum}), so it follows that every rectangular complex represents a hyperconvex metric space.
\end{proof}

\subsection{Distances in rectangular complexes}
\label{sec:dist}
\def\site{\mathop{\mathrm{site}}}

The previous section described rectangular complexes as data structures for representing abstract metric spaces, but we are interested in them as a way to represent more specifically the tight spans of finite metric spaces. In this section we augment the rectangular complex to represent its relationship with the given finite metric space, and we describe how to compute distances in this augmented space. For clarity, we adopt terminology used in computational geometry in the context of Voronoi diagrams: a \emph{site} is one of the points of the finite metric space $(X,d)$ that we are attempting to find the tight span of (or, by extension, its image in the tight span) while a \emph{point} may refer to any point of the rectangular complex under discussion.

\begin{figure}[t]
\centering\includegraphics[width=\textwidth]{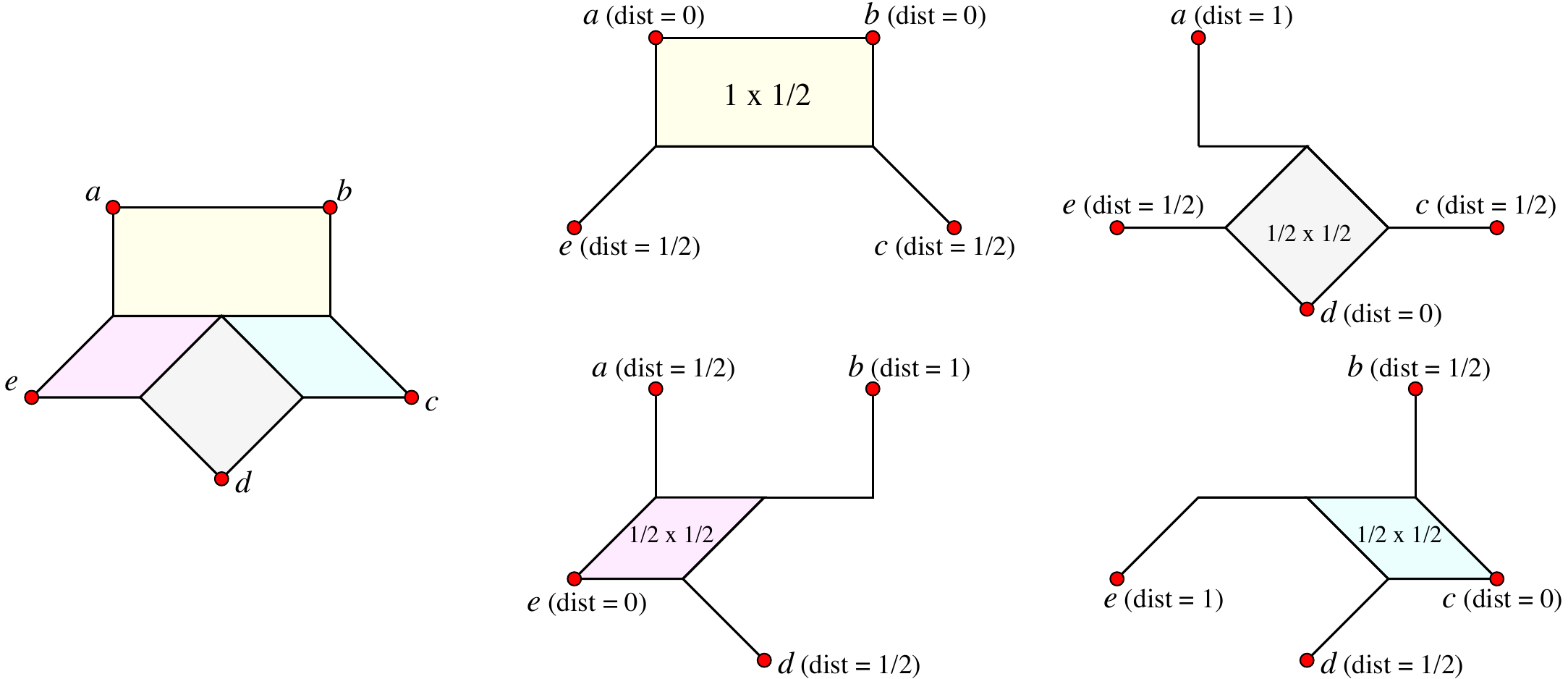}
\caption{One possible representation of the tight span of a 5-cycle $abcde$ (with distance 1 between adjacent sites in the cycle and distance 2 between nonadjacent sites) as a rectangular complex (the leftmost part of the figure), with an exploded view on the right showing the information stored in each face of the complex: its dimensions as an $\ell_1$ rectangle, and the distances to four representative sites, one for each corner of the rectangle. Because of the order-5 cone point in the center of the complex, this example forms a Manhattan orbifold but does not embed into the Manhattan plane.}
\label{fig:c5-exploded}
\end{figure}

Suppose that the rectangular complex $R$ represents the tight span of a finite set $(X,d)$ of sites. We may represent the isometry from $(X,d)$ to $R$ explicitly, by specifying for each site in $X$ which feature of $R$ it maps to and (if that feature is an edge or face) what its position within that feature is. If a site belongs to multiple features, we choose a single one of them arbitrarily.
We also store, as part of our data structure, two \emph{representative sites} from $X$ for each bridge of the rectangular complex and four representative sites from $X$ for each face of the rectangular complex, with the property that each bridge or rectangular face is a subset of the tight span of its representative sites. We associate these representative sites with the endpoints or corners of the bridge or rectangle: for each point $v_i$ that is an endpoint of a bridge or the corner of a rectangle, we store a representative site $\site(v_i)$ with the property that, for every point $r$ in the bridge or rectangle, $d(r,\site(v_i))=d(r,v_i)+d(v_i,\site(v_i))$. We also store, for each endpoint of a bridge or each corner of a rectangle, the distance $d(v_i,\site(v_i))$. If a single point of the rectangular complex is the corner or endpoint of more than one bridge or rectangle, it may have different representative sites and distances stored with it for each of the bridges and rectangles it belongs to.  We will show later that, as our tight span construction algorithm builds a rectangular complex representing the tight span of a set of sites, it is always able to find representative sites with the properties specified above.

From this information, we may calculate the distance between any point of the rectangular complex and any new site, using only information that can be looked up in constant time, via a formula resembling a landmark-based technique of Goldberg and Harrelson for bounding distances in arbitrary metric spaces~\cite{GolHar-SODA-05}:

\begin{lemma}
\label{lem:rect-distance}
Let $C$ be a rectangular complex, augmented as above, that represents the tight span of a set $S$ of $n-1$ sites, let $R$ be a rectangular face of $C$, let $r$ be a point in $R$, and let $s$ be an additional site. Let the four corners of $R$ be the points $v_0$, $v_1$, $v_2$, and $v_3$.
Then, in the tight span of $S\cup\{s\}$, 
\[
d(r,s)=\max \left\{ d(s,\site(v_i)) - d(v_i,\site(v_i)) - d(r,v_i)\mid 0\le i\le 3\right\}.
\]
\end{lemma}

\begin{proof}
By Lemma~\ref{lem:unique-extension}, $C$ has a unique isometric embedding into the tight span of $S\cup\{s\}$, and the distance $d(r,s)$ is well-defined as the distance between the images of $r$ and $s$ in this embedding. Let $T$ be the tight span of $\{s,v_0,v_1,v_2,v_3\}$.  Then (by injectivity) $T$ may be embedded isometrically into the tight span of $S\cup\{s\}$, from which it follows that the distance between $r$ and $s$ as measured within $T$ is the same as the distance between $r$ and $s$ as measured in the tight span of $S\cup\{s\}$.

Recall that (as with any tight span) $T$ may be defined as a space of functions from its set of defining sites $\{s,v_0,v_1,v_2,v_3\}$ to the real numbers and that, if the point $r$ in $T$ is represented in this way as a function $f_r$, then the distance $d(r,s)$ that we wish to calculate is just $f_r(s)$.
Recall also that one of the two defining properties of the tight span is that, for any function $f$ in this space of functions, and any two sites $x$ and $y$, it must be the case that $f(x)+f(y)\ge d(x,y)$. Applying this property with $f=f_r$, $x=s$ and $y=v_i$ gives the inequality $d(r,s)\ge d(s,v_i)-d(r,v_i)$. The triangle inequality $d(s,\site(v_i))\le d(s,v_i)+d(v_i,\site(v_i))$ allows us to replace the term $d(s,v_i)$ in this inequality, giving us the desired lower bound $d(r,s)\ge d(s,\site(v_i))-d(v_i,\site(v_i))-d(r,v_i).$

Next, let $U$ be the tight span of $\{s,\site(v_0),\site(v_1),\site(v_2),\site(v_3)\}$. Since $R\subset U$, $r$ belongs to $U$. The second of the two defining properties of the tight span of a finite set of sites is that, for any function $f$ in the space of functions defining $U$, and any site $x$, there exists a second site $y$ such that $f(x)+f(y)=d(x,y)$. Applying this property with $f=f_r$ and $x=s$, we obtain that there exists an index $i$ for which $d(r,s)=d(s,\site(v_i))-d(r,\site(v_i))$. However, $d(r,\site(v_i))= d(v_i,site(v_i))+d(r,v_i)$, and substituting this into the previous equality gives us $d(r,s)= d(s,\site(v_i))-d(v_i,\site(v_i))-d(r,v_i)$ as desired.
\end{proof}

The same proof shows an analogous result for the bridges of a rectangular complex.

\begin{lemma}
\label{lem:edge-distance}
Let $C$ be a rectangular complex, augmented as above, that represents the tight span of a set $S$ of $n-1$ sites, let $B$ be a bridge of $C$, let $r$ be a point in $B$, and let $s$ be an additional site. Let the two endpoints of $B$ be $v_0$ and $v_1$.
Then, in the tight span of $S\cup\{s\}$, $d(r,s)=\max_i d(r,\site(v_i)) - d(v_i,\site(v_i)) - d(r,v_i).$
\end{lemma}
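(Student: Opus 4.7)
The plan is to mirror the proof of Lemma~\ref{lem:rect-distance} essentially verbatim, replacing the four rectangle corners with the two bridge endpoints $v_0$ and $v_1$.

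First I would invoke Lemma~\ref{lem:unique-extension} to guarantee that $r$ has a unique image in the tight span of $S\cup\{s\}$, so that $d(r,s)$ is well-defined. Next, let $T$ denote the tight span of the four-point subspace $\{s,\site(v_0),\site(v_1)\}\cup\{v_0,v_1\}$. Because $\site(v_i)$ was chosen so that $v_i$ is closer to $\site(v_i)$ than to any other point of $B$, the bridge $B$ sits isometrically inside the tight span of the pair $\{\site(v_0),\site(v_1)\}$, which is just a geodesic segment; that segment's tight span embeds uniquely into $T$ in a way that places each $v_i$ at its correct location, and hence so does $r$. By Lemma~\ref{lem:unique-extension} applied to the inclusion $\{s,\site(v_0),\site(v_1),v_0,v_1\}\subseteq S\cup\{s\}$, $T$ embeds isometrically into the full tight span, so computing $d(r,s)$ reduces to a computation inside $T$.

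Inside $T$, I would view $r$ as one of the functions on the finite site set $\{s,v_0,v_1,\site(v_0),\site(v_1)\}$ satisfying the two tight-span axioms of Section~\ref{sec:tight-span}. The triangle-inequality axiom applied at $\site(v_i)$ gives $d(r,s)+d(r,\site(v_i))\ge d(s,\site(v_i))$; since $v_i$ lies on a geodesic from $r$ (inside $B$) through the endpoint $v_i$ to $\site(v_i)$, we have $d(r,\site(v_i))=d(r,v_i)+d(v_i,\site(v_i))$. Rearranging and taking the maximum over $i\in\{0,1\}$ yields the ``$\ge$'' direction of the claimed formula. For the reverse direction, the minimality axiom forces the defining infimum at the coordinate indexed by $s$ to be attained at some site of $T$; a short case analysis rules out $s$, $v_0$, and $v_1$ themselves as the attaining witnesses (each improves under the triangle inequality through $\site(v_0)$ or $\site(v_1)$), leaving one of $\site(v_0)$, $\site(v_1)$ as the extremal choice and giving equality.

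The main obstacle is exactly this last reduction: justifying that among all sites of $S\cup\{s\}$ only the two distinguished witnesses $\site(v_0)$ and $\site(v_1)$ must be considered when the maximum is taken. This is the precise analogue of the rectangle case in Lemma~\ref{lem:rect-distance}, and depends crucially on the design choice of $\site(v_i)$ as the site that witnesses minimality of the tight-span function value at the bridge endpoint $v_i$, so that after passing to the five-point tight span $T$ no other site can be extremal.
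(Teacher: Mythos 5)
Your proposal matches the paper's own proof: the paper disposes of this lemma with the single remark that ``the same proof [as Lemma~\ref{lem:rect-distance}] shows an analogous result,'' and your adaptation --- Lemma~\ref{lem:unique-extension} for well-definedness of $d(r,s)$, passing to the tight span $T$ of a small finite set containing $s$, $v_0$, $v_1$ and the stored sites, and reading the two directions of the formula off the two tight-span axioms --- is exactly that adaptation, with the one delicate step (that $\site(v_0)$ and $\site(v_1)$ suffice as witnesses) correctly identified as the point where the choice of $\site(v_i)$ is used. You also implicitly work with the corrected formula $d(r,s)=\max_i d(s,\site(v_i)) - d(v_i,\site(v_i)) - d(r,v_i)$, where the statement's first term $d(r,\site(v_i))$ appears to be a typo for $d(s,\site(v_i))$ by analogy with Lemma~\ref{lem:rect-distance}.
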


\begin{corollary}
Suppose that $(X,d)$ is a finite metric space represented by a distance matrix, that $Y\subset X$, that $C$ is a rectangular complex that represents the tight span of $Y$, and that $C$ is augmented as above. Then, for any point $r$ in $C$ and any point $s$ in $X$ we can compute the distance $d(r,s)$, as measured in the tight span of $Y\cup\{s\}$, in time $O(1)$.
\end{corollary}

\section{Incremental construction algorithm}
\label{sec:construct}
In this section we will describe our algorithm for constructing planar tight spans, but first we need some more technical lemmas.

\subsection{Tight spans in the Manhattan plane}

\begin{figure}[t]
\centering\includegraphics[height=1.75in]{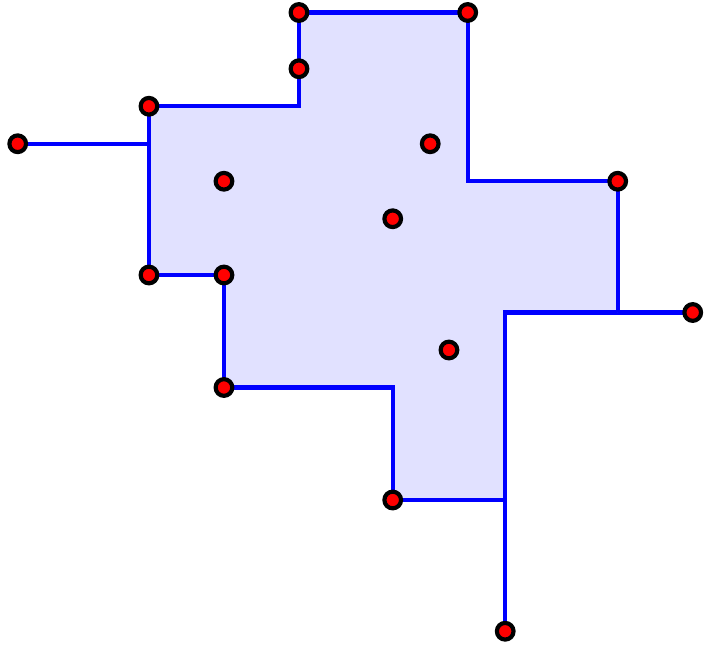}
\caption{The orthogonal hull of a set of point sites. When the hull is connected, it is isometric  to the tight span of the $\ell_1$ distances among its sites. Public-domain image drawn by the author for Wikipedia.}
\label{fig:orthogonal-hull}
\end{figure}

In order to prove the extension property in the next section, we need to study tight spans in the Manhattan plane. We say that a point $h$ in the $\ell_1$ plane is \emph{surrounded} by $X$ if each of the four closed axis-aligned quadrants with $h$ as their apex contains at least one point of $X$, and we define the orthogonal convex hull $H$ of $X$~\cite{KarOve-BIT-88,MonFou-TR-82,NicLeeLia-BIT-83,OttSoiWoo-IS-84} to consist of all points surrounded by $X$ (Figure~\ref{fig:orthogonal-hull}).

\begin{lemma}
\label{lem:L1-span}
Let $X$ be a nonempty subset of the $\ell_1$ plane (not necessarily finite).  If the orthogonal convex hull $H$ of $X$ is connected, then $H$ is isometric to the tight span of $X$.
\end{lemma}

\begin{proof}
Since $X$ is an isometric subset of the $\ell_1$ plane, which is a hyperconvex space, and since (by injectivity of tight spans) any isometric embedding of $X$ into a hyperconvex space can be extended to an isometric embedding of the tight span $T_X$ into the same hyperconvex space, it follows that the tight span of $X$ is isometric to some subset of the $\ell_1$ plane. That is, if we identify the tight span with the set of vectors describing distances from points in the span to sites in $X$, we need only consider distance vectors determined in this way from points in the $\ell_1$ plane. Also observe that if $q$ belongs to the orthogonal convex hull, then no other point $r$ of the $\ell_1$ plane can determine the same distance vector: for, if $p$ is a point of $X$ in the quadrant that has $q$ as apex and that is opposite from the quadrant containing $r$, then $d(p,q)\ne d(p,r)$.

We consider cases for a point $p$ of the $\ell_1$ plane, according to whether it is surrounded and which of the quadrants around it is nonempty:
\begin{itemize}
\item
If a point $p$ in $H$ is surrounded by $X$, its vector of distances to $X$ satisfies the requirements of the tight span that, for each site $q$ in $X$ there exists an $r$ such that $d(p,q)+d(p,r)=d(q,r)$: namely, take $r$ to be a point in the opposite quadrant from $q$. Therefore, its distance vector belongs to the tight span. Since that distance vector cannot be generated by any other point of the $\ell_1$ plane, it follows that $p$ must belong to the tight span of $X$.
\item
Suppose that a point $p$ is not surrounded, and that one of the empty quadrants for $p$ has an opposite non-empty quadrant. In this case, let $q$ be a point of $X$ in the nonempty quadrant. Then there can be no $r$ in $X$ for which $d(p,q)+d(p,r)=d(q,r)$; for every $r$ with this property is in the empty quadrant, which by assumption contains no points of $X$. Therefore, in this case, $p$ cannot belong to the tight span of $X$.
\item
Finally, if a point $p$ is not surrounded, and each of its empty quadrants has an opposite empty quadrant, then (since $X$ is nonempty) it must be the case that $p$ has two diagonally opposite empty quadrants and two diagonally opposite nonempty quadrants. But then the orthogonal convex hull is a subset of the interiors of the two nonempty quadrants and is not connected.
\end{itemize}
Thus, in each case it can be shown that either the convex hull coincides with the tight span or the convex hull is not connected.
\end{proof}

\subsection{One-point extension of geodesic paths}

Our algorithm for constructing planar tight spans is \emph{incremental}: it proceeds by adding one point to the input at a time, maintaining as it does the tight span of the points added so far. Therefore, we need to describe the ways in which the tight span may change when a single point is added. We have already seen one result in this direction, Lemma~\ref{lem:unique-extension}, stating the fact that for any metric space $X$ and point $s$ in $X$  (regardless of planarity) there is a unique embedding of the tight span of $X\setminus\{s\}$ into the tight span of $X$. Our next extension property concerns the case when the space $X\setminus\{s\}$ to which we are adding the point $s$ is a geodesic path; in this case, we show that the tight span of $X$ itself is necessarily planar.

Thus, let $P$ be a metric space that is isometric to a line segment. Then $P$ is hyperconvex and therefore it is its own tight span. If $s$ is an additional point, not on $P$, then there are many possible metric spaces on the set $P\cup \{s\}$; a particular choice in this set of possible metric space is determined (up to isometry) by the function that maps points of $P$ to their distance from $s$, and therefore the shape of the tight span of $P\cup\{s\}$ is determined by this function. In our application of these concepts, a particularly important choice of metric space on $P\cup\{s\}$ will be the spaces determined by the \emph{sawtooth functions}, which we define to be the functions $f$ from $P$ to the positive real numbers with the following properties:
\begin{itemize}
\item If we identify $P$ isometrically with a subset of the real line, then $f$ is piecewise linear,
\item the number of breakpoints of piecewise linearity is finite, and
\item for any subset  of the real line within which $f$ is linear, the derivative of $f$ is $\pm 1$.
\end{itemize}

\begin{lemma}
\label{lem:sawtooth}
Let $(X,d)$ be a metric space, $s\in X$, and $Y=X\setminus\{s\}$. Suppose that the tight span of $Y$ is represented by a rectangular complex $C$. Then along any path formed by the edges of $C$, the distance from $s$ forms a sawtooth function.
\end{lemma}

\begin{proof}
This follows immediately from the distance formula in Lemma~\ref{lem:edge-distance}.
\end{proof}

\begin{figure}[t]
\centering\includegraphics[height=2.5in]{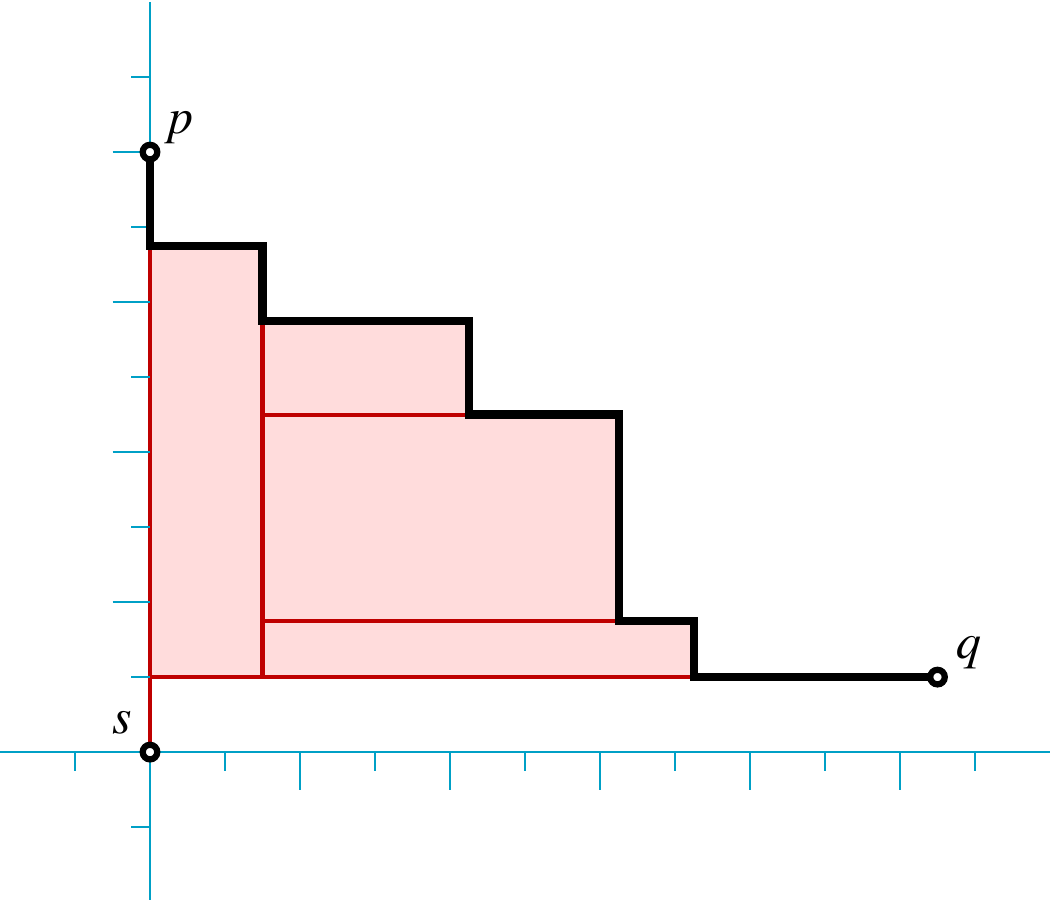}
\caption{Illustration for Lemma~\ref{lem:path-extension}, showing the embedding of path $pq$ in the upper right quadrant of the $\ell_1$ plane (heavy black edges), the additional point $s$ at the origin, and the tight span formed as the convex hull of the path and $s$ (medium red). In this example, the distances from the path to $s$ form a sawtooth function and the tight span has accordingly been partitioned into rectangles, such that only two rectangles lie on the boundary of the span.}
\label{fig:sawtooth}
\end{figure}

\begin{lemma}
\label{lem:path-extension}
Let $(X,d)$ be a metric space, $P$ a geodesic path in $X$, and $s$ a point in $X$ that is not on $P$.
Then the tight span of $P\cup\{s\}$ can be isometrically embedded into an orthogonally convex subset of the $\ell_1$ plane bounded between three geodesic paths, one of which is an isometric image of $P$ and the other two of which connect $s$ to the endpoints of the image of $P$.

If, in addition, the function that maps points of $P$ to their distance from $s$ is a sawtooth function, then the tight span of $P\cup\{s\}$ can be partitioned into a set of $\ell_1$ rectangles, together with $P$ and possibly a single line segment, such that  number of rectangles is equal to one less than the number of local minima of the sawtooth function, and such that the boundary of the tight span is covered by $P$,  by two of the rectangles, and by the line segment.
\end{lemma}

\begin{proof}
Let the endpoints of $P$ be the points $p$ and $q$, and map each point $r$ of $P$ to the point
\[
(x_r,y_r) = \left(\frac{d(p,r)+d(r,s)-d(p,s)}{2},\frac{d(p,s)-d(p,r)+d(r,s)}{2}\right)
\]
of the $\ell_1$ plane, as depicted in Figure~\ref{fig:sawtooth}. It follows from the triangle inequality that this point lies within the positive quadrant of the plane, and that the distance between the images of any two points in $P$ is equal to their distance in $P$, so this is an isometric embedding. The isometry may be extended to $P\cup\{s\}$ by mapping $s$ to the origin. In the case that the function from $P$ to its distances from $s$ is a sawtooth function, the image of $P$ is a polygonal chain in which each edge is axis-parallel.
Now let $T$ be the orthogonal convex hull of the image of $P$ and of the origin. $T$ is connected: it has the form of an orthogonal polygon bounded to the left by the vertical line segment $ps$ along the $y$ axis, from above and to the right by $P$ itself, and from below and to the right by a path from $s$ to $q$ that follows the $y$ axis and the horizontal line through $q$. Therefore, by Lemma~\ref{lem:L1-span}, it is the tight span.
In the case of a sawtooth function, we may cover $T$ with rectangles as shown in the figure.
\end{proof}

\subsection{Attachment points}
Let $(X,d)$ be a metric space, $s\in X$, and $Y=X\setminus\{s\}$. Let $T_X$ and $T_Y$ be the tight spans of $X$ and $Y$ respectively. We define a point $t\in T_Y$ to be a \emph{weak attachment point} for $s$ if every neighborhood of $t$ in $T_X$ contains a point in $T_X\setminus T_Y$, and a \emph{strong attachment point} if every geodesic from $t$ to $s$ in $T_X$ lies entirely within $T_X\setminus T_Y$. Observe that a strong attachment point is automatically a weak attachment point.

In a rectangular complex, we can characterize the attachment points much more specifically.

\begin{lemma}
\label{lem:rect-strong-attachments}
Let $X$ be a metric space and $s$ be a point in $X$.
Let $C$ be a rectangular complex that represents the tight span of $X\setminus\{s\}$.
Then for each point $p$ in $C$, either $p$ is a local minimum of distance from $s$ or there exists a geodesic $P$ in $C$ such that $p$ is an endpoint of $P$ and such that, for any two points $t$ and $t'$ of $P$,
\[
d(p,t)+d(t,s)=d(p,t')+d(t',s).
\]
A point of $C$ is a strong attachment point for $s$ in $C$ if and only if it is a local minimum of distance from $s$.
\end{lemma}

\begin{proof}
If $p$ is not a local minimum of distance, let $c$ be the edge or rectangle of $C$ that contains $p$ and in which $p$ is not a local minimum, let $q$ be the site in $X\setminus\{s\}$ that (according to the distance formulae in Lemmas~\ref{lem:rect-distance} and~\ref{lem:edge-distance}) determines the distance from $p$ to $s$, let $P'$ be a geodesic from $p$ to the vertex of $c$ farthest from $q$, and let $P$ be the portion of $P'$ within which $t$ determines the value of the distance formula.
Then for any point $t$ within $q$, $d(p,t)+d(t,s)=d(q,t)-d(q,p)+d(t,s)=d(q,s)-d(q,p)$. Since the right hand side of this formula does not depend on $t$, it is equal for any two different points $t$ and $t'$ of $P$.

If such a geodesic $P$ exists, then it can be extended into a geodesic in $T_X$ from $p$ to $s$, showing that $p$ is not a strong attachment point. However, if $p$ is a local minimum of distance, there can exist no such geodesic within $C$, so every geodesic from $p$ to $s$ in $T_X$ must avoid $C$, and $p$ is a strong attachment point.
\end{proof}

We now use Lemma~\ref{lem:path-extension} to show a weakened form of path-connectivity for the attachment points in any metric space.

\begin{lemma}
\label{lem:attachment-geodesic}
Let $X$ be a metric space, let $s$ be a point of $X$, and let $Y=X\setminus\{s\}$.
Let $p$ and $q$ be strong attachment points for $s$ in the tight span $T_Y$ of~$Y$. Then there exists a geodesic from $p$ to $q$ in $T_Y$ in which all points are weak attachment points.
\end{lemma}

\begin{proof}
Let $R$ be a maximal set of strong attachment points that all lie on a single geodesic $P$ from $p$ to $q$ in $T_Y$; then $P$ may be partitioned into intervals within which the strong attachment points are dense and intervals within which there are no strong attachment points. The set of weak attachment points is closed; therefore, if $I$ is an interval of $P$ within which the strong attachment points are dense, every point of $I$ must be at least a weak attachment point. Thus, we need only worry about the intervals within which there are no strong attachment points. Assume without loss of generality that $p$ and $q$ bound a single such interval; that is, $p$ and $q$ are strong attachment points such that there are no strong attachment points on any geodesic from $p$ to $q$.

Now let $T_{Ps}$ be the tight span of $P\cup\{s\}$, embedded into the $\ell_1$ plane with $s$ at the origin as described by Lemma~\ref{lem:path-extension}.
Some of the points of $T_{Ps}$ (including all points in $P$) belong to $T_Y$, and some other points (including $s$) do not belong to $T_Y$. Let $R$ be the axis-aligned bounding rectangle of $T_{Ps}$ in the $\ell_1$ plane. If the interior of $R$ contains any points of $T_Y$, let $r$ be such a point that is as close as possible to $s$; then $r$ must be a strong attachment point, contradicting the assumption that there are none on any geodesic from $p$ to $q$. If, on the other hand, all points in $T_{Ps}\cap T_Y$ lie on the boundary of $R$, then in particular the path $P$ follows this boundary and every point in $P$ is a weak attachment point.
\end{proof}

\subsection{Good paths}

We say that a point of a rectangular complex belongs to the \emph{boundary} of the complex if it belongs to a bridge or to an edge that is adjacent to only one rectangle, or if it is the endpoint of an edge of either of these two types.

Let $C$ be a rectangular complex, and let $S$ be a subset of its boundary. Then we define a \emph{good path} for $S$ to be a geodesic $P$ in $C$ with the following properties:
\begin{itemize}
\item $P$ starts and ends at a point of $S$, and contains all of the points of $S$.
\item Every point of $P$ belongs to the boundary of $C$.
\item If $P$ contains an articulation point $p$ of $C$, and $p$ is not an endpoint of $P$, then the points of $P$ on the two sides of $p$ belong to different blocks of $C$.
\item If $R$ is a rectangle of $C$, and $P$ contains points in the relative interiors of two adjacent sides of $R$, then the corner where these two sides meet belongs to $S$.
\end{itemize}
In our incremental algorithm, $S$ will be a set of strong attachment points and the good path will form a set of weak attachment points at which new parts of the tight span will be connected to the existing complex.

Let $X$ be a metric space and $s$ be a point in $X$.
Let $C$ be a rectangular complex that represents the tight span of $X\setminus\{s\}$, let $S$ be the set of local minima in $C$ for distance from $s$, and suppose that there exists a good path $P$ for $S$ in $C$. We say that $P$ is a \emph{usable path} if it satisfies the following additional constraint: for every point $p$ that is a local maximum $p$ of the sawtooth function on $P$ that is determined by the distance from $s$ (according to Lemma~\ref{lem:sawtooth}), either $p$ is an articulation point of $C$, or the rectangles incident to $p$ have a total angle of at least $3\pi/2$.

\begin{lemma}
\label{lem:attachment-is-good}
Let $X$ be a metric space and $s$ be a point in $X$.
Let $C$ be a rectangular complex that represents the tight span of $X\setminus\{s\}$, and suppose that the tight span $T_X$ of $X$ itself is homeomorphic to a subset of the plane. Let $S$ be the set of strong attachment points for $s$ in $C$, and let $W$ be the set of weak attachment points. Then $W$ contains a usable path for $S$ in $C$.
\end{lemma}

\begin{proof}
By Lemma~\ref{lem:attachment-geodesic}, $W$ contains a geodesic between any two points of $S$. We observe that, for any three points of $S$, the three geodesics connecting them must have a single geodesic between two of them as their union, with the third point interior to this geodesic: for, otherwise, some two of the geodesics would have a union that is not a path, and the two-dimensional subsets of $T_X$ that (by Lemma~\ref{lem:path-extension}) are attached to these geodesics would form a nonplanar set. By the same argument, if $P$ is a geodesic subset of $W$ that covers two or more points in $S$ and is as long as possible, it must cover all of $S$, for otherwise the union of $P$ with a geodesic in $W$ from an uncovered point to an uncovered point at or near one of the ends of $P$ would violate maximality or planarity.

Then $P$ must contain only boundary points of $C$, for if it contained an interior point $r$ then $C$ together with a path in $T_X\setminus C$ from $r$ to $s$ would form a nonplanar set. Similarly, it is not possible for $P$ to pass through an articulation point but for both sides of the path near the point to remain within the same block, for this would cause a nonplanarity.

Finally, suppose that $P$ passes through two sides of a rectangle of $C$. Then, since $P$ is a geodesic and follows the boundary of $C$, it must pass through the corner point $c$ where the two sides meet.  And, if $c$ were not a strong attachment point, then by the distance formula in Lemma~\ref{lem:edge-distance} the distance from $s$ would be, near $c$, a sawtooth function for which it is not a local minimum. By Lemma~\ref{lem:path-extension}) within a neighborhood of $C$, the tight span of $P$ and $s$ would form either a single quadrant of the $\ell_1$ plane (if the distance from $s$ is a local maximum) or a pair of quadrants (if it is neither a local minimum nor a local maximum). But together with the rectangle itself this would form a cone point of order two or three, which is not hyperconvex, so additional points would have to be added to the tight span near $C$ to make a hyperconvex set, again violating planarity.

Thus, $P$ is a geodesic that contains all of the strong attachment points and satisfies all the other conditions of a good path, as required. It must be a usable path as well, because any violation of usability would lead to a cone point of order three and a violation of planarity as before.
\end{proof}

\begin{lemma}
\label{lem:good-is-tight}
Let $X$ be a metric space and $s$ be a point in $X$.
Let $C$ be a rectangular complex that represents the tight span of $X\setminus\{s\}$, let $S$ be the set of local minima in $C$ for distance from $s$, and suppose that there exists a usable path $P$ for $S$ in $C$. Let $D$ be the rectangular decomposition of the tight span of $P\cup\{s\}$ given by Lemma~\ref{lem:path-extension}. Then the union $C\cup D$ (with the sides of rectangles along $P$ subdivided into edges as appropriate) is a rectangular complex that represents the tight span of $X$.
\end{lemma}

\begin{proof}
By the definition of a good path, $C\cup D$ attaches the rectangles in $D$ to the boundaries of rectangles in $C$, so it has at most two rectangles per edge as is required in a rectangular complex. The conditions defining a good path ensure that each vertex continues to have the correct local neighborhood structure: the requirement that it pass to a new block of $C$ whenever it passes through an articulation point ensures that, at each such point, the neighborhood of the point continues to form a set of rectangle-edge paths. And the requirement that the path must form a strong attachment point whenever it passes through two adjacent sides of a rectangle ensures that, at each such point, the neighborhood has total angle $2\pi$. At any local maximum $p$ of distance from $s$, the requirement that $p$ be an articulation point or a point of total angle at least $3\pi/2$ means that, in $C\cup D$, either $p$ belongs to an alternating sequence of rectangles and edges (in which case we need not worry about their total angle) or it belongs to an alternating cycle with total angle at least $2\pi$. And at any other point of the path, the neighborhood has total angle equal to its total angle in $C$ (at least $\pi$ since it is not the corner of a rectangle) plus its total angle in $D$ ($\pi$ for a weak attachment point or $3\pi/2$ for a strong one), satisfying the conditions of a rectangular complex.

Thus, $C\cup D$ is a rectangular complex, and therefore a hyperconvex space, containing $X$. Distances between pairs of points in $X\setminus\{s\}$ are represented accurately by the assumption that $C$ is the tight span of $X\setminus\{s\}$. Distances from strong attachment points in $C$ to $s$ are represented accurately by the construction of $D$, and from any other point within $C$, by Lemma~\ref{lem:rect-strong-attachments}, we can find a geodesic to $s$ that stays within $C$ until it reaches a local minimum of distance to $s$ (a strong attachment point) and then follows a geodesic in $D$ to $s$. Therefore, $C\cup D$ is a hyperconvex space into which $X$ is isometrically embedded.

Finally, among hyperconvex spaces containing $X$, $C\cup D$ is minimal. For, every point in $C$ belongs to the tight span of $X\setminus\{s\}$ and therefore to the tight span of $X$. And, every point in $D$ belongs to the tight span of $P\cup\{s\}$ and therefore to the tight span of $C\cup\{s\}$, which coincides with the tight span of $(X\setminus\{s\})\cup\{s\}=X$.
\end{proof}

\subsection{Finding a good path}

By the results of the previous section, we can find the attachment points of each addition to a rectangular complex by searching for good paths. In this section we prove a uniqueness property that makes this search easier.

For a given set $S$ of boundary points of $C$, define the \emph{augmentation} of $S$ to be the union of $S$ with the set of articulation points of $C$ that lie on geodesics between two points of $S$.
We say that a block of $C$ is \emph{critical} for $S$ if it contains two or more points of the augmentation of $S$. The following lemma allows us to reduce the search for good paths to individual critical blocks.

\begin{lemma}
\label{lem:good-path-block}
Let $C$ be a rectangular complex, let $S$ be a subset of its boundary, and let $S'$ be the augmentation of $S$. Then a good path for $S$ exists if and only if the following three conditions are all met:
\begin{itemize}
\item The set of critical blocks forms a path in the tree of blocks of $C$ that was described in Lemma~\ref{lem:block-tree}.
\item Within each critical block $B$, there is a good path for $B\cap S'$
\item Each point $s$ in $S'$ that belongs to more than one critical block is an endpoint of the good path within each of the critical blocks that contain it.
\end{itemize}
If these conditions are met, then every good path for $S$ may be formed by concatenating good paths
meeting these conditions within each critical block, and every such concatenation forms a good path.
\end{lemma}

\begin{proof}
We suppose first that $P$ is a good path, and show that it can necessarily be decomposed into paths within critical blocks as described in the lemma.
Any path in $C$ can only pass through a path of the block of trees described in Lemma~\ref{lem:block-tree}.  We observe that in any block that is not critical, there is a unique articulation point $p$ that lies on any geodesic between points of the block and points of $S$; it follows that $P$ cannot enter such a block, because it is assumed to start and end at points of $S$ and would therefore have to both enter and exit through the single articulation point, a contradiction to the assumption that $P$ is a geodesic. Therefore, $P$ passes through a set of critical blocks that forms a path. This set must consist of all the critical blocks, for otherwise $P$ could not pass through all points of $S$.
Within any critical block, it must start and end either at a point of $S$ or at an articulation point; if it ends at an articulation point that does not belong to $S$, then that point lies on a geodesic between the two ends of $P$, and therefore belongs to the augmentation $S'$. Thus, within each critical block $B$, $B\cap P$ is indeed a good path for $B\cap S'$: it starts and ends at a point of $B\cap S'$, and the other conditions of being a good path are purely local. Finally, each point of $S'$ that belongs to more than one critical block must be an endpoint within each of the critical blocks that contains it, for otherwise the concatenation of the paths within these blocks would not itself be a path. Thus, if a good path $P$ exists, then the conditions of the lemma must be met, and further $P$ is formed by concatenating good paths within the critical blocks as the lemma states.

In the other direction, suppose that there exists a set of good paths within all of the critical paths, as stated by the conditions of the lemma. The union of these paths must itself be a path $P$, for the paths within each block are required to start or end at the articulation points connecting pairs of blocks, and no three blocks can share an articulation point or else the blocks would not form a path in the tree of blocks.
Concatenating geodesics within metric spaces that are connected at a single point preserves the property of being a geodesic, so $P$ must be a geodesic. $P$ must start and end at points of $S$, for all of the other starts and ends of the paths within individual blocks belong to two blocks and so do not form starts or ends of paths when the two paths within these blocks are concatenated. $P$ contains all points of $S$, for they are all contained within one of the paths in blocks. And $P$ inherits from the paths from which it was formed the other local properties of a good path: it contains only boundary points of $C$ and it does not touch two adjacent sides of a rectangle without passing through the corner. Therefore, $P$ is a good path.
\end{proof}

\begin{figure}[t]
\centering\includegraphics[width=5in]{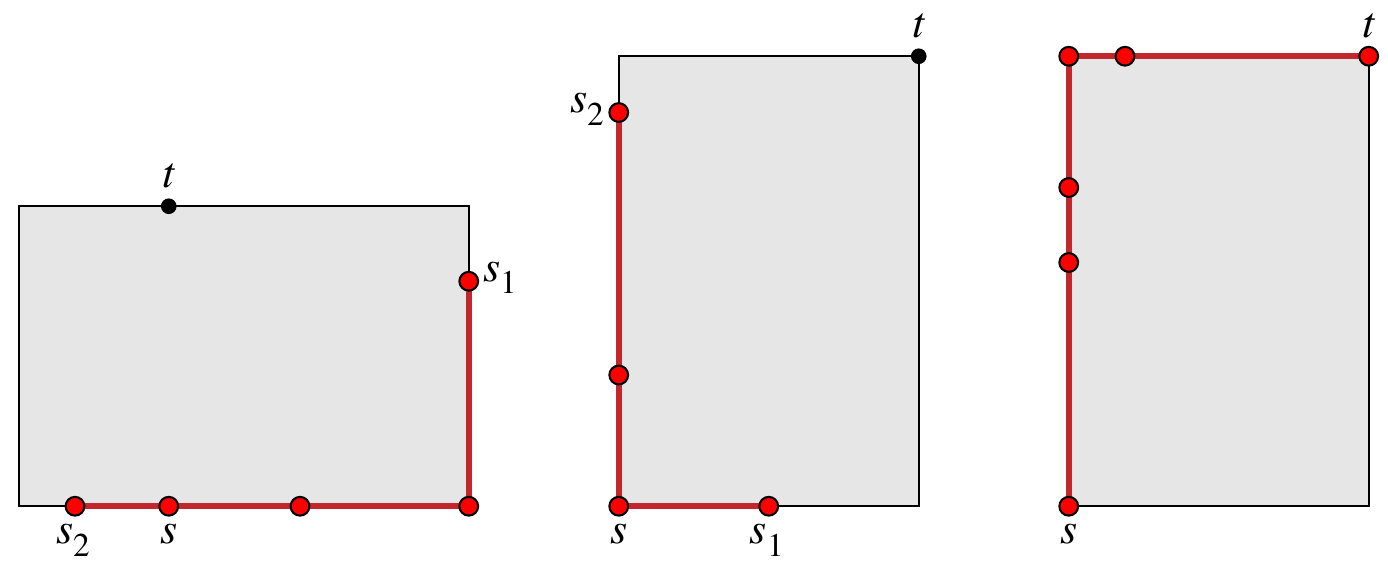}
\caption{Cases for Lemma~\ref{lem:boundary-geodesic}, showing how to find the unique good path that connects a set $S$ of points within a single block of a rectangular complex. Left: the distance from $s$ to each boundary point reaches a local minimum at $t$, and the good path has the nearest points to $t$ in either direction around the boundary as its endpoints. Center: the distance from $s$ reaches a local maximum at $t$, and $t$ does not belong to $S$; again, the good path has the nearest points to $t$ as its endpoints. Right: the distance from $s$ reaches a local maximum at a point $t$ that does belong to $S$. The good path has $s$ and $t$ as endpoints; at most one of the two boundary paths with these endpoints avoids all rectangle corners that are not in $S$.}
\label{fig:good-path-cases}
\end{figure}

\begin{lemma}
\label{lem:boundary-geodesic}
Let $S$ be any subset of boundary points of a rectangular complex $C$, given as a list of the points within each object of $C$. Then there is at most one good path for $S$ in $C$. We can test whether this path exists, and if so construct it, in time linear in the number of objects in $C$ and the number of points in~$S$.
\end{lemma}

\begin{proof}
By Lemma~\ref{lem:good-path-block} we may assume without loss of generality that $C$ has a single block: for otherwise, by induction, the statement of Lemma~\ref{lem:boundary-geodesic} holds for each critical block of $C$, and we can test whether a good path exists in $C$ by constructing the augmentation of $S$ and identifying the critical blocks, and then performing the test individually within each critical block and then verifying that the paths within each critical block obey the conditions of Lemma~\ref{lem:good-path-block}.  Recall that, when $C$ is a single block, it must be homeomorphic to a disk.

Let $s$ be any point in $S$, and consider the geodesic distances within $C$ from $s$ to each other point on the boundary of this disk. At $s$ itself, the distance from $s$ is zero, and it increases in both directions as one moves along the boundary away from $s$. If it decreases again to a local minimum $t$, then there cannot exist a geodesic from $t$ to $s$ that remains on the boundary of $C$; therefore, in this case, a good path cannot pass through $t$. The remaining portion of boundary, avoiding $t$, has the topology of an interval of a line. In this case, it is straightforward to find the necessary path: if $s_1$ and $s_2$ are the points of $S$ that are nearest to $t$ in the clockwise and counterclockwise directions around the boundary of $C$, respectively, then the good path, if it exists, must be the path from $s_1$ clockwise around the boundary to $s_2$. This case is illustrated in Figure~\ref{fig:good-path-cases}, left.

If, on the other hand, there is no local minimum of distance from $s$, then the distances from $s$ increase in both directions around the boundary of $C$ to a single local maximum $t$. There are three different subcases for this case:
\begin{itemize}
\item If $t$ does not belong to $s$, then as in the case of a local minimum $t$ cannot belong to the good path, because a geodesic from $s$ to any other point cannot pass through a local maximum of distance from $s$. Again, if $s_1$ and $s_2$ are the points of $S$ that are nearest to $t$ in the clockwise and counterclockwise directions around the boundary of $C$, respectively, then the good path, if it exists, must be the path from $s_1$ clockwise around the boundary to $s_2$ (Figure~\ref{fig:good-path-cases}, center).
\item If $t$ belongs to $s$, then both $s$ and $t$ must be endpoints of the good path, for no path that passes through one or the other of them could be a geodesic. If $S\setminus\{s,t\}$ is nonempty, then there can only be a good path if one of the two boundary paths from $s$ to $t$ contains all points in $S$ (Figure~\ref{fig:good-path-cases}, right).
\item If $S=\{s,t\}$, the previous analysis does not distinguish which of the two boundary paths from $s$ to $t$ might be a good path. But in this case, it follows from Lemma~\ref{lem:four-corners} that $C$ has at least four corners of single rectangles on its boundary. At most two of these four corners can be $s$ and $t$. The other corners do not belong to $S$, so a good path cannot go through either of them, as to do so it would have to go through two adjacent sides of the rectangles for which they are corners. Only one of the two paths from $s$ to $t$ can avoid these two other corners.
\end{itemize}
Algorithmically we may choose arbitrarily a single point $s\in S$, use Lemmas~\ref{lem:rect-distance} and~\ref{lem:edge-distance} to compute the distances of all other points on the boundary of $C$ from $s$, and use the case analysis above to determine a path $P$  along the boundary of $C$ such that, if there is a good path, it must be $P$. We may test whether $P$ is a geodesic by computing its length (the sum of the lengths of its segments) and using the distance formulas of Lemmas~\ref{lem:rect-distance} and~\ref{lem:edge-distance} to check whether its endpoints are that length apart. We may also test the other conditions for being a good path easily within the stated time bound, as they depend only on local conditions.
\end{proof}

\subsection{The incremental algorithm}

To construct the planar tight span of a metric space, we add points to the metric space one at a time, at each step representing the tight span as a rectangular complex. Specifically, we perform the following steps:

\begin{enumerate}
\item Choose arbitrarily two points of $X$ and let $C$ be a rectangular complex consisting of a single line segment connecting these two points, with length equal to the distance between them. We will maintain as an invariant that $C$ represents the tight span of the points added so far.
\item For each remaining point $s$ of $X$, in an arbitrary order:
\begin{enumerate}
\item Use Lemmas~\ref{lem:rect-distance} and~\ref{lem:edge-distance} to compute the distance from $s$ to each feature of $C$.
\item Construct the set $S$ of local minima of distance, which by Lemma~\ref{lem:rect-strong-attachments} must be the strong attachment points for $s$ in $C$.
\item If any local minimum point $p$ is calculated to have distance zero from $s$, then $s$ must already belong to the tight span of the previous points. In this case, verify that the distances from $s$ to all other sites equal the distances from $p$ to all sites, and if not abort the algorithm with an error condition. If the distances do all match, then place $s$ at point $p$ (that is, store it as one of the sites associated with the feature of $C$ that contains $p$ without otherwise changing $C$) and move on to the next point of $X$.
\item Apply Lemma~\ref{lem:boundary-geodesic} to find a good path $P$ for $S$ in $C$, if it exists.
\item If a good path does not exist, then by Lemma~\ref{lem:attachment-is-good} the tight span of the points added so far is not planar; therefore, the tight span of $X$ cannot be planar either, and we abort the algorithm with an error condition.
\item Verify that $P$ is a usable path, by identifying the local maxima of distance from $s$ along $P$ and checking the total angle in $C$ at each local maximum. If it is not usable, then as above  by Lemma~\ref{lem:attachment-is-good} the tight span is nonplanar; abort the algorithm with an error condition.
\item Apply Lemma~\ref{lem:path-extension} to construct a rectangular complex $D$ representing the tight span of $P\cup\{s\}$
\item Augment $D$ by selecting two representative sites for each bridge and four representative sites for each new rectangle, according to Lemmas~\ref{lem:rect-distance} and~\ref{lem:edge-distance}, so that distances may be computed within it. In the case of a bridge, one representative site is $s$ and the other can be any previously added site. In the case of a rectangle, three representative sites can be taken to be $s$ and the two endpoints of $P$; the fourth can be taken as any existing site that determines the distance from $s$ according to the distance formula at the fourth corner of the rectangle, a point of $P$ at which the distance from $s$ is a local maximum.
\item Replace $C$ by the union $C\cup D$, splitting any edges in $C$ or $D$ as necessary along the path $P$ where the two complexes are attached to each other. By Lemma~\ref{lem:good-is-tight} this is the tight span of the points added so far. 
\end{enumerate}
\item Return $C$ as the tight span of $X$.
\end{enumerate}

\begin{theorem}
\label{thm:incremental-correctness}
Let $(X,d)$ be a finite metric space. Then if the tight span of $(X,d)$ is planar, the algorithm above constructs a rectangular complex that correctly represents it, and if not planar then the algorithm above correctly reports that it is not planar.
\end{theorem}

\begin{proof}
The correctness of the algorithm follows from the lemmas cited within the pseudocode above: the algorithm only terminates with a failure condition when some subset of $X$ has a tight span that is guaranteed to be nonplanar, and if it terminates successfully then it has correctly found the tight span of the subset of points considered so far.
\end{proof}

\begin{theorem}
\label{thm:incremental-size-bound}
Suppose that a finite metric space $(X,d)$ has a planar tight span, and let $n=|X|$. Then the rectangular complex constructed by the algorithm described above has $O(n)$ features.
\end{theorem}

\begin{proof}
Consider the potential function $\Phi=R+3B+E$ where $R$ is the number of rectangles in the current rectangular complex, $B$ is the number of bridges, and $E$ is the number of edges that are on the boundary of a block. As we will show, after $k$ iterations of the algorithm, $\Phi\le 6k$.

Whenever a point is added, with a set $S$ of local minima of the distance function, the addition creates $|S|-1$ new rectangles and at most one new bridge, increasing $\Phi$ by $|S|+2$. In addition, if the edges at either end of $P$ were bridges, they may be split into a bridge and a boundary edge, increasing $\Phi$ by two more units total. However, at least $|S|-2$ edges of the previous rectangular complex lie entirely along the sides of newly added rectangles: this is true of every edge containing or incident to a strong attachment point, no two of which can share an edge. A bridge containing a strong attachment point becomes one or two boundary edges, and a boundary edge containing a strong attachment point becomes an interior edge of its block; these changes reduce $\Phi$ by at least $|S|-2$. Thus, in all cases, the total change to $\Phi$ per point of $X$ is at most an increase by six. It follows that, as stated above, after $k$ iterations of the algorithm $\Phi\le 6k$ and and after the algorithm completes $\Phi\le 6n$.

Each vertex of the complex is an endpoint of a bridge or a boundary edge, or an interior point that is the corner of at least two rectangles; each edge has two endpoints and each rectangle has four corners, so the number of vertices is at most $2(R+B+E)\le 2\Phi\le 12n$.
By Euler's formula, each planar subdivision with $F$ faces and $V$ vertices has exactly $F+V-2$ edges; in this case, $F=R+1$ (there is a single outer face, and each rectangle is a face) and we have bounded $V$ by $2(R+B+E)$, so there are at most $3R+2B+2E-1\le 3\Phi\le 18n$ edges.
\end{proof}

\begin{theorem}
\label{thm:incremental-complexity}
Let $(X,d)$ be a finite metric space, and let $n=|X|$. Then the algorithm described above takes time $O(n)$ per iteration, and $O(n^2)$ total time, regardless of whether it outputs the tight span of $X$ or whether it determines that the tight span is nonplanar.
\end{theorem}

\begin{proof}
Since each computation takes time linear in the size of the subdivision so far together with the number of newly added features, and we have seen that the size of the subdivision remains linear throughout the algorithm, the total time per point is linear and the total time for the whole computation is quadratic.
\end{proof}

\section{Manhattan plane embedding}
\label{sec:mpe}

\begin{figure}[t]
\centering\includegraphics[height=2in]{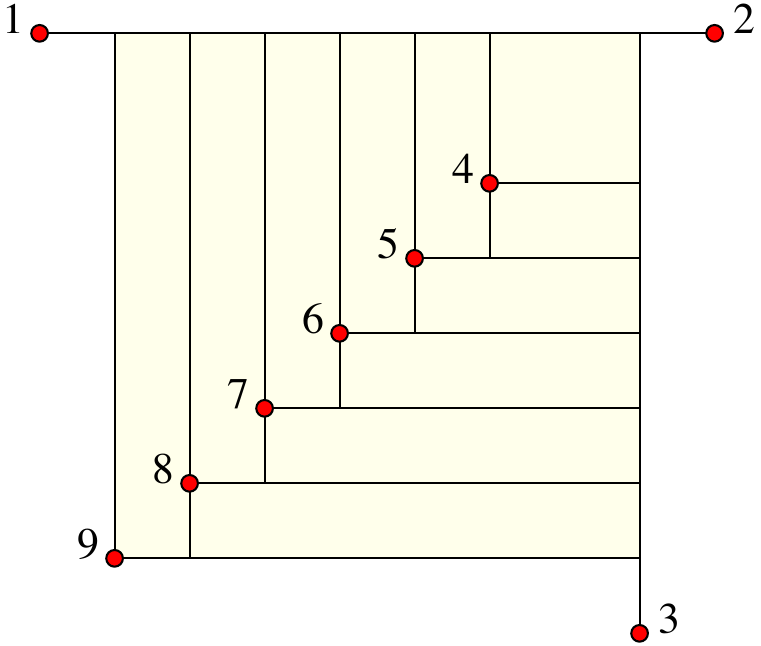}
\caption{An insertion ordering for a set of sites in the $\ell_1$ plane that leads to a rectangular complex with $2n-7$ rectangles.}
\label{fig:l1-many-rects}
\end{figure}

\begin{figure}[t]
\centering\includegraphics[scale=0.5]{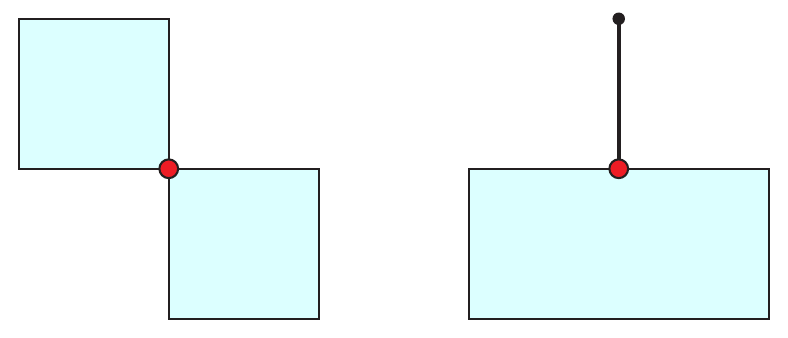}
\caption{Condition~\ref{first-condition} of Manhattan plane embeddability: If a rectangular complex is embedded into the $\ell_1$ plane, any articulation point must be the apex of two empty quadrants, either diagonally opposite each other (left) or separated by a path of vertices and bridges (right).}
\label{fig:two-empty-quads}
\end{figure}

\begin{figure}[t]
\centering\includegraphics[scale=0.5]{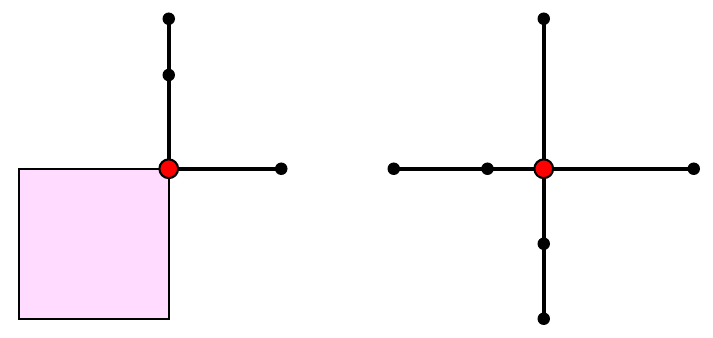}
\caption{Conditions \ref{plus-sign-cond}~and~ \ref{two-empty-quad-cond} of Manhattan plane embeddability: If a rectangular complex is embedded into the $\ell_1$ plane, for any articulation point attached to three blocks, at least two of those blocks must lead to a path of vertices and bridges (left). If it is attached to four blocks, all four must lead to paths.}
\label{fig:many-empty-quads}
\end{figure}

\begin{figure}[t]
\centering\includegraphics[scale=0.5]{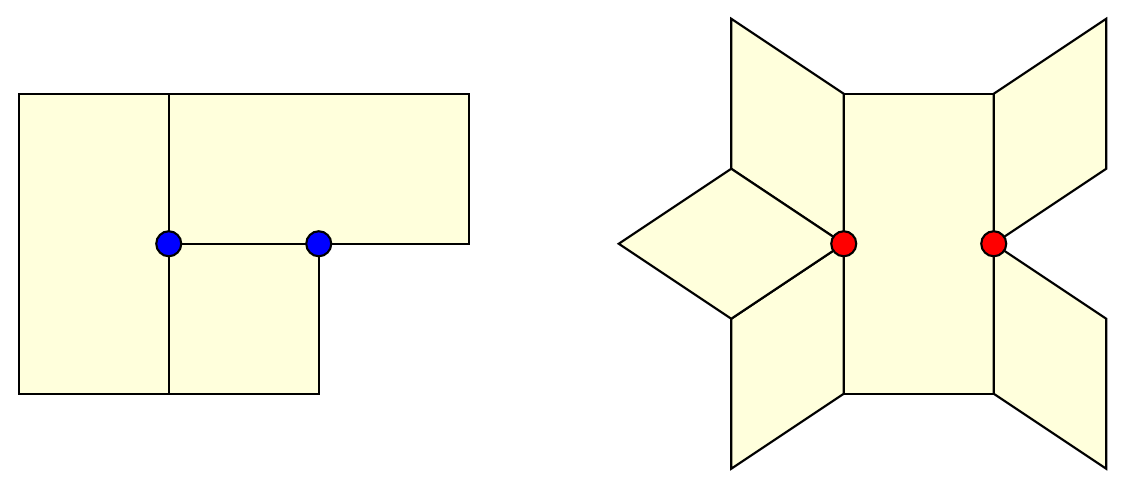}
\caption{Condition \ref{embeddable-angle-cond} of Manhattan plane embeddability: In a rectangular complex that can be embedded into the $\ell_1$ plane, the internal vertices must have a total angle of exactly $2\pi$, and the boundary vertices must have a total angle of at most $3\pi/2$, as shown by the two blue points on the left. If an internal point has an angle greater than $2\pi$, or a boundary point has an angle greater than $3\pi/2$, as do the red points on the right, then they form a cone point or cone inflection point and it is impossible to embed a neighborhood of the point into the Manhattan plane.
for any articulation point attached to three blocks, at least two of those blocks must lead to a path of vertices and bridges (left). If it is attached to four blocks, all four must lead to paths.}
\label{fig:embeddable-angles}
\end{figure}

\begin{figure}[t]
\centering\includegraphics[scale=0.5]{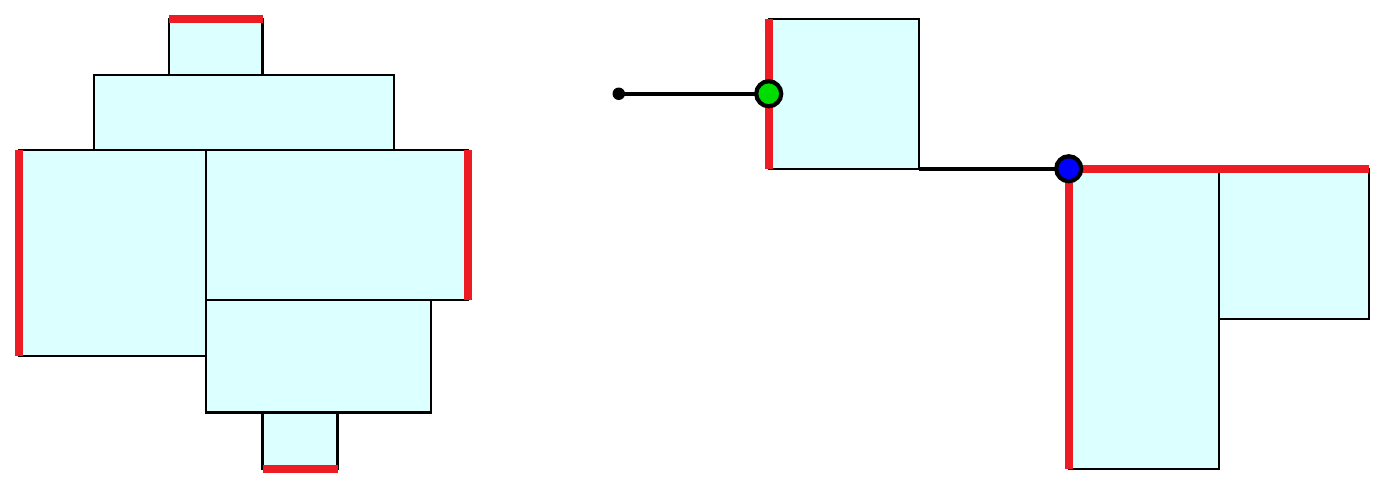}
\caption{Conditions \ref{extremal-seg-cond}--\ref{extremal-hinge-cond} of Manhattan plane embeddability: If a rectangular complex is embedded into the $\ell_1$ plane, each block must have four extremal segments (shown as the thick red segments in the left figure). Each hinge (such as the blue point on right) must belong to two extremal segments of any block it belongs to, and each other articulation point (such as the green point on the right) must belong to at least one extremal segment of any block it belongs to.}
\label{fig:extremal-segments}
\end{figure}

\begin{figure}[t]
\centering\includegraphics[scale=0.5]{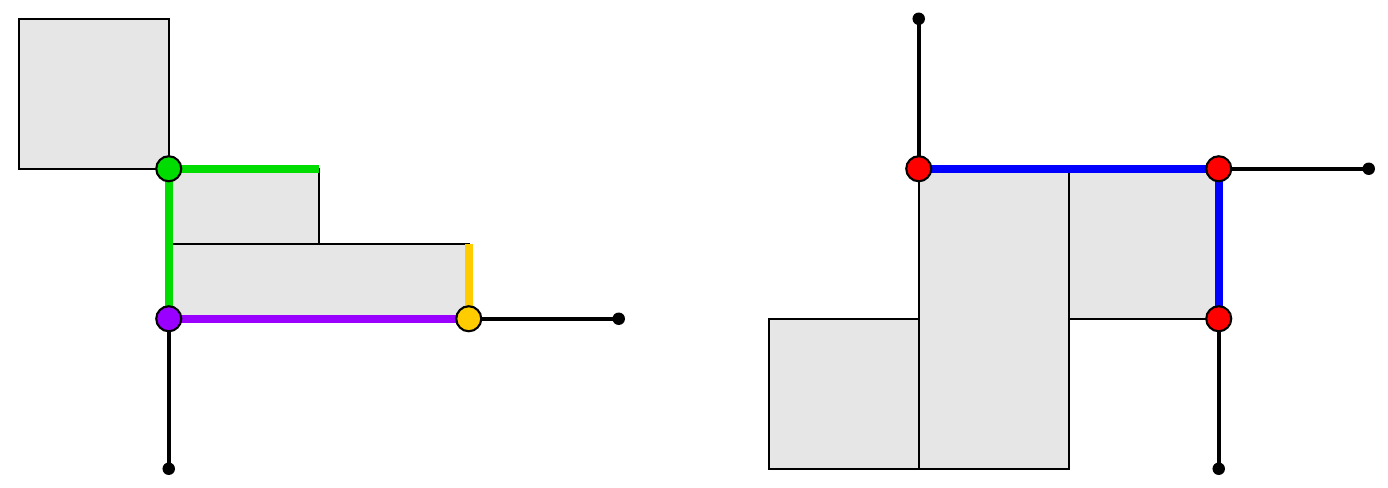}
\caption{Condition~\ref{assignment-cond} of Manhattan plane embeddability: If a rectangular complex is embedded into the $\ell_1$ plane, there must exist an assignment of two adjacent extremal segments to each hinge and one adjacent extremal segment to each non-hinge articulation point, such that each extremal segment is assigned at most once. In the left figure, the green hinge is matched with the two green extremal segments, the purple articulation point is matched to the purple segment, and the golden articulation point is matched to the golden segment.  In the rectangular complex on the right, no such assignment is possible, because the three red articulation points share only the two blue extremal segments. Therefore, no $\ell_1$ embedding is possible for the complex on the right.}
\label{fig:extremal-assignment}
\end{figure}

If a metric space can be embedded isometrically into the Manhattan plane, then so can its tight span. Therefore, we can find Manhattan plane embeddings by using the algorithm of the previous section to find a planar tight span of the given metric space, and using the structure of the tight span to determine its possible Manhattan plane embeddings.

We remark that, for metrics that can be embedded into the Manhattan plane, the constant factors in the analysis of Theorem~\ref{thm:incremental-size-bound} can be significantly tightened: the interior local minima of distance on the path connecting each new part of the tight span to the previously constructed part must each be sites, so if the algorithm creates $k$ new rectangles then it removes $k-1$ sites from the boundary of the tight span. The first rectangle cannot be created until there are four sites, and (if any rectangles are created) at least four sites remain on the boundary at the end of the algorithm. Thus, in this case, the total number of rectangles formed for a metric space with $n$ sites is at most $2n-7$. This bound is tight: there exist metric spaces formed by points in the Manhattan plane, and insertion orderings for those points, that cause our algorithm to form $2n-7$ rectangles (Figure~\ref{fig:l1-many-rects}).

We define a \emph{hinge} of a rectangular complex to be an articulation point of the complex that connects more than two blocks, or that connects exactly two blocks and has the additional property that neither of the two subcomplexes formed by cutting the complex at the articulation point has the structure of a path of bridges.

As we now discuss, there are several straightforward necessary conditions that an isometric embedding of a rectangular complex into the Manhattan plane must obey:
\begin{enumerate}
\item
\label{first-condition}
Any two blocks of the complex that meet at an articulation point must come from points that are separated geometrically by an empty open quadrant of the plane. For, if not, a geodesic within the complex from one block to the other, through the articulation point, would have a longer length than their geometric distance, contradicting the assumption that the embedding is isometric. Thus, at any articulation point, there must be at least two empty open quadrants.
 If the part of the complex on one side of the articulation point forms a path of bridges and articulation points, these two quadrants may be adjacent to each other (Figure~\ref{fig:two-empty-quads}, right); otherwise they must be diagonally opposite each other (Figure~\ref{fig:two-empty-quads}, left).
\item
\label{first-combinatorial-condition}
\label{plus-sign-cond}
Each articulation point of the complex must be incident to at most four blocks. If it is incident to exactly four blocks, the complex must have the structure of a tree with one degree-four vertex connecting four paths of bridges and articulation points (Figure~\ref{fig:many-empty-quads}, right). For, by the property above, the empty regions between each of these four blocks must occupy all four of the quadrants having the articulation point as its apex, leaving only the lines between these quadrants to be occupied by the complex. But a point set that lies entirely on one axis-aligned line will lead to a rectangular complex that is a path, and therefore the subcomplexes that are connected at the articulation point must each be paths.
\item
\label{two-empty-quad-cond}
For the same reason, if an articulation point is incident to three blocks, at least two of them must form paths of bridges and articulation points (Figure~\ref{fig:many-empty-quads}, left).
\item
\label{embeddable-angle-cond}
Within a block that is not a bridge, each boundary vertex is surrounded by rectangles forming a total angle of at most $3\pi/2$ with that vertex, and each non-boundary vertex is surrounded by rectangles forming a total angle of exactly $2\pi$ (Figure~\ref{fig:embeddable-angles}). For, otherwise the geometry in a neighborhood of that vertex would be that of a higher-order cone point or cone inflection point, not possible in the Manhattan plane.
\item
\label{extremal-seg-cond}
Within a block that is not a bridge, define a \emph{boundary segment} to be a maximal path of consecutive boundary edges and vertices with the property that the angle at each internal vertex of the path is exactly $\pi$ (Figure~\ref{fig:extremal-segments}, left). Then there must be exactly four \emph{extremal segments} for which the angles at the vertices at each end of the segment are $\pi/2$; all remaining boundary segments must have one end vertex with angle $\pi/2$ and the other with angle $3\pi/2$. For otherwise, the block could not form an orthogonally convex simple polygon in the plane.
\item
\label{extremal-articulation-cond}
Within a block that is not a bridge, each articulation point must lie on an extremal segment (as for example the green point in Figure~\ref{fig:extremal-segments}, right). For, at the other boundary points of the block, the block extends into three of the four quadrants, making it impossible to find two opposite empty quadrants for the articulation point.
\item
\label{extremal-hinge-cond}
If $B$ is a block that is not a bridge, every hinge of $B$ must coincide with the vertex where two extremal segments meet (as for example the blue point in Figure~\ref{fig:extremal-segments}, right). For again, otherwise the condition of having two opposite empty quadrants could not be met.
\item
\label{last-condition}
\label{assignment-cond}
Within any block $B$ that is not a bridge, there must exist an assignment of articulation points to extremal segments of $B$, such that each hinge is assigned to both of its adjacent extremal segments, each non-hinge articulation point is assigned to one extremal segment, and each extremal segment has at most one articulation point assigned to it (Figure~\ref{fig:extremal-assignment}). For, if a geometric embedding of the complex exists, then an assignment of this type may be determined by assigning each hinge to both incident extremal segments (Condition~\ref{extremal-hinge-cond}), and assigning each non-hinge articulation point $a$ to an incident extremal segment $s$ (Condition~\ref{extremal-articulation-cond}) such that the bridge incident to $a$ is not embedded parallel to $s$. This choice of assignment cannot assign two different articulation points to the same extremal segment because then neither of the articulation points would be surrounded by the two empty quadrants required by Condition~\ref{first-condition}.
\end{enumerate}

As we show below, Conditions \ref{first-condition}--\ref{last-condition} are necessary and sufficient for an embedding of a rectangular complex in the Manhattan plane to be isometric. Condition~\ref{first-condition} depends on the specific embedding, but Conditions \ref{first-combinatorial-condition}--\ref{last-condition} are combinatorial in nature and may be tested in linear time for a given complex without regard for its embedding, as the next lemma shows. As we also show, a rectangular complex that obeys Conditions \ref{first-combinatorial-condition}--\ref{last-condition} always has an embedding that also obeys Condition~\ref{first-condition}, and this embedding may be found in linear time.

\begin{lemma}
\label{lem:l1-test}
It is possible to test whether  a given rectangular complex obeys each of Conditions \ref{first-combinatorial-condition}--\ref{last-condition} in time linear in the number of features of the complex.
\end{lemma}

\begin{proof}
These conditions are combinatorial in nature and do not depend on a fixed embedding of the complex, so they may be tested without reference to an embedding. We may find the blocks of the complex in linear time by applying a graph connected components algorithm to the graph that has a vertex for each rectangle of the complex and an edge connecting two rectangles that share an edge.
By searching from leaves of the block tree inwards it is possible to identify each bridge of the complex that leads to a path of bridges and articulation points, and thereby determine which articulation points are hinges. With this information in hand it is straightforward to test Conditions \ref{plus-sign-cond}~and~ \ref{two-empty-quad-cond}. Additionally, Condition~\ref{embeddable-angle-cond} may be tested simply by adding the angles of the rectangles incident to each vertex and determining whether these rectangles surround the vertex in a cycle or whether the vertex is incident to some boundary edges of the complex.

Given the partition of the complex into blocks, it is possible to determine for each block the set of boundary edges, and to connect these edges by shared vertices into a cycle. Once this is done, the extremal segments of the block may be found and counted simply by examining the angles of the boundary vertices along the cycle. This allows us to test Conditions \ref{extremal-seg-cond}--\ref{extremal-hinge-cond} efficiently.

Finally, the assignment of extremal segments to hinges and articulation points (Condition~\ref{assignment-cond}) is again straightforward. If a block has more than four articulation points then it cannot be possible to assign articulation points to extremal segments as described by the condition. But otherwise there are at most four articulation points and four extremal segments to assign to them, so there are finitely many possible assignments to test and the existence of a valid assignment can be determined in constant time per block.
\end{proof}

\begin{figure}[t]
\centering\includegraphics[scale=0.5]{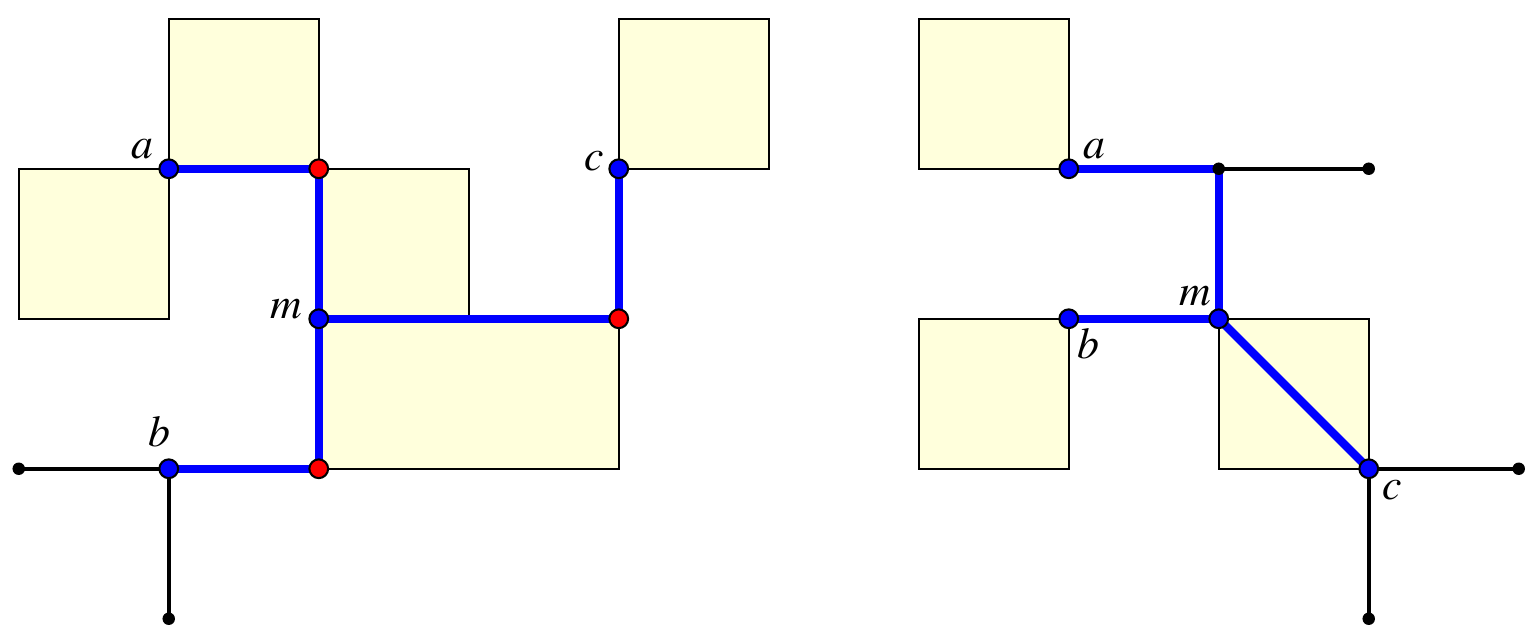}
\caption{Lemma~\ref{lem:hinge-path}: in a complex satisfying Conditions \ref{first-combinatorial-condition}--\ref{last-condition} the tight span of the hinges must be a path. For, if some three hinges do not lie on a common path, then either their median~$m$ lies within some block of the complex, and that block has too many hinges to satisfy  Condition~\ref{assignment-cond} (left), or the median is itself a hinge that violates Condition~\ref{two-empty-quad-cond} (right).}
\label{fig:clawfreehinges}
\end{figure}

\begin{lemma}
\label{lem:hinge-path}
If a rectangular complex $R$ satisfies all of Conditions \ref{first-combinatorial-condition}--\ref{last-condition}, then the tight span of the hinges in the complex forms a path.
\end{lemma}

\begin{proof}
Assume for a contradiction that $R$ satisfies all of Conditions \ref{first-combinatorial-condition}--\ref{last-condition} but that the tight span of the hinges does not form a path. Then, some triple of hinges $a$, $b$, and $c$ has a tight span $T$ that is not a path. By injectivity, $T$ may be embedded isometrically into~$R$, and its points may be identified with points of~$R$.
The tight span of three sites is either a path (with one of the three sites at a median point between the other two) or three paths connected at one central point that is not a site; by our choice of $a$, $b$, and $c$, it must be the case that $T$ is in the form of three paths connected at a distinct median point $m$. If two or more of the paths from $m$ to the three hinges $a$, $b$, and $c$ pass through a single block $B$ of $R$, then $B$ has at least three incident hinges (one on each of the paths from $m$ to $a$, $b$, and $c$, as shown in Figure~\ref{fig:clawfreehinges}, left); but then, by Condition~\ref{assignment-cond} these three hinges would have to be assigned to six different extremal segments of~$B$, contradicting Condition~\ref{extremal-seg-cond} which limits the number of extremal segments to four. If, on the other hand, $m$ is an articulation point of $R$, and the three paths from $m$ to $a$, $b$, and $c$ pass through distinct blocks (Figure~\ref{fig:clawfreehinges}, right), then $m$ is incident to three non-path subcomplexes of $R$, violating Condition~\ref{two-empty-quad-cond}. Thus in all cases we have found a contradiction to our assumptions, completing the proof.
\end{proof}

\begin{lemma}
\label{lem:embed-block}
Each block that is not a bridge in a rectangular complex $C$ that satisfies all of Conditions \ref{first-combinatorial-condition}--\ref{last-condition} has an embedding into the $\ell_1$ plane as an orthogonal polygon, with the extremal segments lying along the bounding box of the polygon. The embedding may be found in linear time and is unique up to translations, rotations by angles that are multiples of $\pi/2$, and reflections of the plane.
\end{lemma}

\begin{proof}
Choose arbitrarily one point $o$ of $C$ to be placed at the origin of the embedding and one orientation for the complex at that point. For each other point $p$ of $C$ we may determine its embedding in the Manhattan plane by tracing a curve in $C$ from $o$ to $p$ and tracing a matching curve in the plane from the origin to the image of $p$; the choice of curve is irrelevant, because any two curves from $o$ to $p$ in $C$ can be concatenated together to form a closed curve in $C$, and the shortest two curves that lead to an inconsistent placement would form a simple closed curve that does not bound a disk in $C$, contradicting the requirement that every simple closed curve in a rectangular complex bounds a disk. In particular, if $p$ and $q$ are any two points, a curve from $o$ to $p$ and then via a geodesic to $q$ is mapped in this way to a curve in the plane that includes an equal-length geodesic from $p$ to $q$, showing that this embedding is isometric. Every rectangle is thus embedded in an axis-aligned way, so the boundary of the block must form an orthogonal polygon; it must be an orthogonally convex polygon because otherwise it would not be hyperconvex, contradicting Theorem~\ref{thm:rect-complex-is-hyperconvex}. In any orthogonally convex orthogonal polygon, the extremal segments are necessarily on the sides of the bounding box.
\end{proof}

\begin{lemma}
\label{lem:l1-embed}
If a rectangular complex satisfies all of Conditions \ref{first-combinatorial-condition}--\ref{last-condition} then it can be embedded isometrically into the Manhattan plane; an embedding of this type can be found in time linear in the size of the complex.
\end{lemma}

\begin{proof}
We embed the blocks of the complex in the order given by the path of hinges from Lemma~\ref{lem:hinge-path}. At any step of this embedding let $h$ be the hinge connecting the current block $B$ to the earlier part of the path; we maintain an invariant that the part of the complex on the other side of $h$ from $B$ has already been embedded within the closed negative quadrant with respect to hinge~$h$.

As the first step of this embedding process, suppose that the given rectangular complex contains at least one hinge, and let $h$ be the first hinge on the path of hinges. We split the analysis into subcases according to what type of hinge $h$ is:
\begin{itemize}
\item If $h$ is incident to four paths of bridges and articulation points, we may place $h$ at the origin of the plane, and the four paths along the coordinate axes, completing the embedding.
\item If $h$ is  incident to two paths of bridges and articulation points and to a third block $B$, we  embed the hinge at the origin of the plane and the two paths along the two negative coordinate axes, setting up the desired invariant for $h$ and $B$.
\item If $h$ connects two blocks $B_1$ and $B_2$ that are not bridges, and block $B_1$ has no other incident hinge, then $B_1$ may be embedded in the plane by Lemma~\ref{lem:embed-block} and oriented in such a way that $h$ is the upper right vertex of its bounding box; any path connected to $B$ by a non-hinge articulation point may be embedded along a line perpendicular to the extremal segment assigned to that articulation point. In this way the desired invariant is again set up for $h$ and $B_2$.
\end{itemize}

At each intermediate step of the embedding process, we have a block $B$ incident to two hinges $h_1$ and $h_2$, where $h_1$ and the part of the complex on the other side of $h_1$ from $B$ have already been embedded in a way that satisfies the invariant. If $B$ is a bridge then we may embed $h_2$ at any point within the positive quadrant with apex at $h_1$ that is at the correct distance from $h_1$, preserving the invariant for the next step in the process.  On the other hand, if $B$ is not a bridge, an embedding of it may be given by Lemma~\ref{lem:embed-block}, in such a way that the previous hinge is the lower left corner of its bounding box and the next hinge is the upper right corner; there can be no other articulation points incident to $B$ because the two hinges together use up all four of $B$'s extremal segments. In this way the invariant that the embedded portions of the complex lie to the lower left of $h_2$ is again preserved.

When the final hinge of the path of hinges is reached, it may connect to two paths, which may be embedded on lines extending upwards and rightwards of the hinge. Alternatively, it may connect to a block with a single hinge; an embedding of this block may be given by Lemma~\ref{lem:embed-block}, placed in such a way that the hinge lies on the lower left corner of the bounding box of the block, and again any path connected to the block by a non-hinge articulation point may be embedded along a line perpendicular to the extremal segment assigned to that articulation point.

It remains to consider complexes that have no hinges. Such a complex may consist of a single path, in which case it may be embedded along a coordinate axis of the plane. Alternatively, it may have a single non-bridge block, which may be embedded by Lemma~\ref{lem:embed-block}, after which any path connected to the block by a non-hinge articulation point may be embedded along a line perpendicular to the extremal segment assigned to that articulation point.
\end{proof}

As a consequence, we have:

\begin{theorem}
\label{thm:l1-complex}
Given any rectangular complex, we may test whether it can be embedded isometrically into the Manhattan plane, and if so find an embedding, in linear time.
\end{theorem}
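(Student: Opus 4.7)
The plan is to split the argument into two linear-time subroutines: one that verifies every necessary condition from the bullet list on the input rectangular complex $C$, and one that, whenever those conditions hold, actually builds an isometric embedding into the Manhattan plane. Theorem~\ref{thm:incremental-planar-tight-span} already bounds the total description size of $C$ by $O(n)$, so ``linear time'' just means linear in the size of the input complex. As preprocessing I would compute the block-cut tree of $C$ in one depth-first pass, identifying every articulation point, every bridge, and every non-bridge block, and marking which articulation points are hinges (this test is local: look at the adjacent blocks and check whether either side is a path of bridges).

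The verification step walks the boundary cycle of each non-bridge block $B$ once. During the walk, I classify each boundary vertex's angle as $\pi/2$, $\pi$, or $3\pi/2$, check that interior vertices have total angle exactly $2\pi$, extract the boundary segments, and record the extremal ones (those whose endpoints both have angle $\pi/2$). I then confirm there are exactly four extremal segments, that each articulation point of $B$ lies on an extremal segment, and that each hinge of $B$ lies at a corner shared by two extremals. The assignment of articulation points to extremal segments is essentially forced: every hinge is pinned to both of its adjacent extremals, and every non-hinge articulation point of $B$ is assigned to the unique extremal segment perpendicular to the incident path edge leading out of $B$; a single counting pass verifies the ``at most one per extremal segment'' rule. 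Each of these checks is local and costs $O(|B|)$, for $O(n)$ overall.

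To build the embedding, I root the block-cut tree arbitrarily and embed the root: a bridge becomes a horizontal segment of the given length, and a non-bridge block has one extremal segment placed horizontally at the origin, after which the stored rectangle dimensions determine the plane coordinates of every vertex of the block (well-defined because the block is an orthogonally convex polygon tiled by $L_1$ rectangles). I then traverse the block-cut tree in DFS order. At each articulation point $v$ where a child block $B'$ attaches to a parent already embedded, the extremal-segment assignment and the empty-opposite-quadrants condition pin down which pair of opposite quadrants at $v$ the block $B'$ must occupy, and the orientation of $B'$'s extremal segment at $v$ resolves the remaining reflection ambiguity. This determines the placement of $B'$ in time linear in its size, for $O(n)$ total.

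The hard part will be proving sufficiency: that the produced placement is genuinely an isometry and that distinct blocks do not overlap in the plane. Isometry within a single non-bridge block holds because an orthogonally convex union of $L_1$ rectangles has intrinsic metric equal to the restriction of the plane $L_1$ metric, so any plane geodesic between two of its points stays inside it; isometry across blocks follows because every intrinsic geodesic in $C$ must pass through the unique chain of articulation points separating its endpoints in the block-cut tree, and distances add up across those articulation points by the triangle equality. Non-overlap reduces to a local claim at each articulation point: the two-opposite-empty-quadrants condition together with the extremal-segment assignment forces the blocks meeting at $v$ into disjoint pairs of opposite quadrants of a small neighborhood of $v$, so the embedding is locally injective there. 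Because the block-cut tree has no cycles, propagating these disjoint quadrant placements via DFS cannot introduce crossings between non-adjacent blocks; any would-be collision would already be witnessed as a quadrant conflict at a common articulation point, contradicting one of the verified local conditions. The technical core of the full proof is thus a finite case analysis at articulation points, after which the linear-time bound is immediate from the preprocessing and the single DFS traversal.
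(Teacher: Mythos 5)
Your overall architecture (linear-time verification of the bullet-list conditions, then a traversal-based construction of the embedding) matches the paper's, and your verification pass is essentially the paper's Lemma~\ref{lem:l1-test}. The gap is in the sufficiency half, specifically in how you rule out overlaps between blocks that do \emph{not} share an articulation point. You assert that ``any would-be collision would already be witnessed as a quadrant conflict at a common articulation point,'' but non-adjacent blocks have no common articulation point, so there is nothing local to witness the conflict: a purely local quadrant analysis at each cut vertex does not by itself prevent two blocks separated by a chain of bridges and intermediate blocks from landing on top of each other. The paper closes this hole with a global argument: Lemma~\ref{lem:hinge-path} first proves from the verified conditions that the hinges of the complex form a path (via a tight-span/median argument: a non-path triple of hinges would force either a block with three assigned hinges or an articulation point incident to three non-path pieces, both excluded by the conditions), and Lemma~\ref{lem:l1-embed} then embeds the blocks \emph{in the order given by that path}, maintaining the invariant that everything embedded so far lies in the closed negative quadrant of the current hinge while everything still to come will go into the positive quadrant. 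That monotone invariant is what actually separates far-apart blocks; your arbitrary rooting of the block-cut tree and DFS order gives you no comparable global separation statement, and you never establish the hinge-path structure that makes such an invariant possible.

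A secondary, smaller issue: for a single non-bridge block you assert that the rectangle dimensions determine consistent plane coordinates ``because the block is an orthogonally convex polygon tiled by $L_1$ rectangles,'' but that is the conclusion, not a hypothesis. The paper's Lemma~\ref{lem:embed-block} derives consistency of the coordinates from the rectangular-complex axiom that every simple closed curve bounds a disk (two curves from the base point to $p$ giving inconsistent placements would yield a non-disk-bounding cycle), and only then deduces isometry and orthogonal convexity. Your argument would be repaired by (i) proving the hinge-path lemma from the verified conditions, (ii) replacing the arbitrary DFS by a sweep along the hinge path with the paper's negative-quadrant invariant, and (iii) grounding the single-block consistency claim in the disk-bounding axiom.
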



\begin{theorem}
\label{thm:l1-metric}
We may determine whether a finite metric space $(X,d)$ with $n$ points represents the distances between $n$ points of the Manhattan plane in time $O(n^2)$. If we have already performed this test on a metric space with $n$ points, we may add one additional point to the metric space and re-test it in time $O(n)$.
\end{theorem}

\begin{proof}
Apply Theorem~\ref{thm:incremental-complexity} to compute the tight span $C$ of $(X,d)$, and apply Theorem~\ref{thm:l1-complex} to test the embeddability of and find an embedding for $C$.
\end{proof}

\section{Discussion}
\label{sec:discuss}

We have characterized the metric spaces that may be formed as a planar tight span of a finite metric space, shown that they may be represented concisely and that distances may be computed quickly from this concise representation, and used these results to develop an efficient algorithm for constructing planar tight spans and for finding embeddings of metric spaces into the Manhattan plane.

Although it is known that a metric space can be embedded into the Manhattan plane if and only if this is true of every six-point subspace~\cite{BanChe-DCG-96} we observe that no such finite criterion exists for having a planar tight span. For, let $G$ be a graph formed from the disjoint union of a $k$-cycle (with $k\ge 4$) and a single isolated vertex, and form a metric space on the vertices of $G$ with distance one between adjacent vertices and two between nonadjacent vertices. Then $G$ has a nonplanar tight span in the form of $k$ $\ell_1$ squares of side length $1/2$ connected at an order-$k$ cone point, together with a bridge leading from that cone point to the isolated vertex. However, removing any one point from $G$ leads to a metric space with a planar tight span.

The next simplest case to investigate would seem to be the metric spaces having two-dimensional tight spans. Can these tight spans be represented concisely and constructed efficiently? As was already shown by Dress~\cite{Dre-AiM-84}, the tight span of five points is always two-dimensional, but if it is not planar then it will contain internal boundaries that are not aligned with the axes of an $\ell_1$-plane representation of its local features, so in order to handle this case it would be necessary to develop a more general representation of two-dimensional tight spans that is not based purely on rectangles; see~\cite{Kar-AC-98} and \cite{Che-AAM-00} for additional mathematical investigations of two-dimensional tight spans.

More ambitiously, it would be of interest to determine for any fixed $d$ the complexity of representing and constructing a tight span of a finite metric space that is guaranteed to be at most $d$-dimensional. The existence of polynomial time algorithms for this problem, for any fixed $d$, is not ruled out by Edmonds' NP-completeness proof~\cite{Edm-DCG-08}; rather, that proof shows that it is likely to be hard to determine whether a tight span of dimension at most three can be embedded isometrically into $\ell_\infty^3$. In recent work with Maarten L\"offler~\cite{EppLof-SoCG-11}, we found a positive answer to this question, showing that for any fixed $d$ the tight span may be represented as a set of bounded faces of a halfspace intersection with polynomial complexity, and that it may be constructed in this way in polynomial time. However, the polynomials in this result grow quadratically as a function of $d$ and it would be of interest to determine tighter bounds.

{\raggedright
\bibliographystyle{abuser}
\bibliography{injective}}

\begin{thebibliography}{10}

\bibitem{AviDez-Nw-91}
D.~Avis and M.~Deza.
\newblock {The cut cone, $L^1$ embeddability, complexity, and multicommodity
  flows}.
\newblock {\em Networks} 21(6):595{--}617, 1991,
  \href{http://dx.doi.org/10.1002/net.3230210602}%
{doi:10.1002/net.3230210602}.

\bibitem{Bad-SODA-03}
M.~B{\u{a}}doiu.
\newblock {Approximation algorithm for embedding metrics into a two-dimensional
  space}.
\newblock {\em Proc. 14th ACM-SIAM Symp. Discrete Algorithms (SODA 2003)},
  pp.~434{--}443, 2003, \url{http://dl.acm.org/citation.cfm?id=644178}.

\bibitem{BanChe-DCG-96}
H.-J. Bandelt and V.~Chepoi.
\newblock {Embedding metric spaces in the rectilinear plane: a six-point
  criterion}.
\newblock {\em Discrete Comput. Geom.} 15(1):107{--}117, 1996,
  \href{http://dx.doi.org/10.1007/BF02716581}%
{doi:10.1007/BF02716581}.

\bibitem{BanChe-Nw-98}
H.-J. Bandelt and V.~Chepoi.
\newblock {Embedding into the rectilinear grid}.
\newblock {\em Networks} 32(2):127{--}132, 1998,
  \href{http://dx.doi.org/10.1002/(SICI)1097-0037(199809)32:2$<$127::AID-NET5$%
>$3.0.CO;2-D}%
{doi:10.1002/(SICI)1097-0037(199809)32:2$<$127::AID-NET5$>$3.0.CO;2-D}.

\bibitem{BanCheEpp-SJDM-10}
H.-J. Bandelt, V.~Chepoi, and D.~Eppstein.
\newblock {Combinatorics and geometry of finite and infinite squaregraphs}.
\newblock {\em SIAM J. Discrete Math.} 24(4):1399{--}1440, 2010,
  \href{http://dx.doi.org/10.1137/090760301}%
{doi:10.1137/090760301},
  \href{http://arxiv.org/abs/0905.4537}{arXiv:0905.4537}.

\bibitem{CatCheVax-TCS-11}
N.~Catusse, V.~Chepoi, and Y.~Vax{\`e}s.
\newblock {Embedding into the rectilinear plane in optimal $O(n^2)$ time}.
\newblock {\em Theoretical Computer Science}, 2011,
  \href{http://dx.doi.org/10.1016/j.tcs.2011.01.038}%
{doi:10.1016/j.tcs.2011.01.038},
  \href{http://arxiv.org/abs/0910.1059}{arXiv:0910.1059}.

\bibitem{ChaEriWor-SoCG-08}
E.~W. Chambers, J.~Erickson, and P.~Worah.
\newblock {Testing contractibility in planar rips complexes}.
\newblock {\em Proc. 24th ACM Symp. Computational Geometry (SoCG '08)},
  pp.~251{--}259, 2008, \href{http://dx.doi.org/10.1145/1377676.1377721}%
{doi:10.1145/1377676.1377721}.

\bibitem{Che-AAM-00}
V.~Chepoi.
\newblock {Graphs of some CAT(0) complexes}.
\newblock {\em Advances in Applied Mathematics} 24(2):125{--}179, 2000,
  \href{http://dx.doi.org/10.1006/aama.1999.0677}%
{doi:10.1006/aama.1999.0677}.

\bibitem{ChrLar-SJDM-91}
M.~Chrobak and L.~L. Larmore.
\newblock {A new approach to the server problem}.
\newblock {\em SIAM J. Discrete Math.} 4:323{--}328, 1991,
  \href{http://dx.doi.org/10.1137/0404029}%
{doi:10.1137/0404029}.

\bibitem{ChrLar-Algs-94}
M.~Chrobak and L.~L. Larmore.
\newblock {Generosity helps, or an 11-competitive algorithm for three servers}.
\newblock {\em Journal of Algorithms} 16:234{--}263, 1994,
  \href{http://dx.doi.org/10.1006/jagm.1994.1011}%
{doi:10.1006/jagm.1994.1011}.

\bibitem{Dev-AoC-06}
M.~Develin.
\newblock {Dimensions of tight spans}.
\newblock {\em Annals of Combinatorics} 10(1):53{--}61, 2006,
  \href{http://dx.doi.org/10.1007/s00026-006-0273-y}%
{doi:10.1007/s00026-006-0273-y},
  \href{http://arxiv.org/abs/math.CO/0407317}{arXiv:math.CO/0407317}.

\bibitem{Dre-AiM-84}
A.~W.~M. Dress.
\newblock {Trees, tight extensions of metric spaces, and the cohomological
  dimension of certain groups}.
\newblock {\em Advances in Mathematics} 53:321{--}402, 1984,
  \href{http://dx.doi.org/10.1016/0001-8708(84)90029-X}%
{doi:10.1016/0001-8708(84)90029-X}.

\bibitem{DreHubMou-DM-01}
A.~W.~M. Dress, K.~T. Huber, and V.~Moulton.
\newblock {Metric spaces in pure and applied mathematics}.
\newblock {\em Proceedings Quadratic Forms LSU}, pp.~121{--}139, Documenta
  Mathematica, 2001,
  \url{http://www.emis.ams.org/journals/DMJDMV/lsu/dress-huber-multon.pdf}.

\bibitem{DreHubMou-AC-01}
A.~W.~M. Dress, K.~T. Huber, and V.~Moulton.
\newblock {Totally split-decomposable metrics of combinatorial dimension two}.
\newblock {\em Annals of Combinatorics} 5(1):99{--}112, 2001,
  \href{http://dx.doi.org/10.1007/PL00001294}%
{doi:10.1007/PL00001294}.

\bibitem{DreHubMou-AM-02}
A.~W.~M. Dress, K.~T. Huber, and V.~Moulton.
\newblock {An explicit computation of the injective hull of certain finite
  metric spaces in terms of their associated Buneman complex}.
\newblock {\em Advances in Mathematics} 168(1):1{--}28, 2002,
  \href{http://dx.doi.org/10.1006/aima.2001.2039}%
{doi:10.1006/aima.2001.2039},
  \url{http://www.math.uni-bielefeld.de/fsp-math/Preprints/142.pdf}.

\bibitem{Edm-DCG-08}
J.~Edmonds.
\newblock {Embedding into $l_{\infty}^2$ is easy, embedding into $l_{\infty}^3$
  is NP-complete}.
\newblock {\em Discrete Comput. Geom.} 39(4):747{--}765, 2008,
  \href{http://dx.doi.org/10.1007/s00454-008-9064-z}%
{doi:10.1007/s00454-008-9064-z}.

\bibitem{Epp-TA-09}
D.~Eppstein.
\newblock {Manhattan orbifolds}.
\newblock {\em Topology and its Applications} 157(2):494{--}507, 2006,
  \href{http://dx.doi.org/10.1016/j.topol.2009.10.008}%
{doi:10.1016/j.topol.2009.10.008},
  \href{http://arxiv.org/abs/math/0612109}{arXiv:math/0612109}.

\bibitem{EppLof-SoCG-11}
D.~Eppstein and M.~L{\"o}ffler.
\newblock {Bounds on the complexity of halfspace intersections when the bounded
  faces have small dimension}.
\newblock {\em Proc. 27th ACM Symp. Computational Geometry (SoCG '11)}, 2011,
  \href{http://dx.doi.org/10.1145/1998196.1998257}%
{doi:10.1145/1998196.1998257},
  \href{http://arxiv.org/abs/1103.2575}{arXiv:1103.2575}.

\bibitem{EppWor-WADS-09}
D.~Eppstein and K.~A. Wortman.
\newblock {Optimal embedding into star metrics}.
\newblock {\em Proc. Algorithms and Data Structures Symposium (WADS 2009)},
  pp.~290{--}301. Springer-Verlag, Lecture Notes in Computer Science 5664,
  2009, \href{http://dx.doi.org/10.1007/978-3-642-03367-4\_26}%
{doi:10.1007/978-3-642-03367-4\_26},
  \href{http://arxiv.org/abs/0905.0283}{arXiv:0905.0283}.

\bibitem{Ghr-BAMS-08}
R.~Ghrist.
\newblock {Barcodes: the persistent topology of data}.
\newblock {\em Bull. Amer. Math. Soc.} 45(1):61{--}75, 2008,
  \href{http://dx.doi.org/10.1090/S0273-0979-07-01191-3}%
{doi:10.1090/S0273-0979-07-01191-3}.

\bibitem{GhrMuh-IPSN-05}
R.~Ghrist and A.~Muhammad.
\newblock {Coverage and hole-detection in sensor networks via homology}.
\newblock {\em Proc. 4th Int. Symp. Information Processing in Sensor Networks
  (IPSN '05)}, 2005, \href{http://dx.doi.org/10.1109/IPSN.2005.1440933}%
{doi:10.1109/IPSN.2005.1440933}.

\bibitem{GolHar-SODA-05}
A.~V. Goldberg and C.~Harrelson.
\newblock {Computing the shortest path: $A^*$ search meets graph theory}.
\newblock {\em Proc. 16th ACM-SIAM Symp. Discrete Algorithms (SODA 2005)},
  pp.~156{--}165, 2005, \url{http://dl.acm.org/citation.cfm?id=1070455}.

\bibitem{HerJos-CDM-07}
S.~Herrmann and M.~Joswig.
\newblock {Bounds on the $f$-vectors of tight spans}.
\newblock {\em Contributions to Discrete Mathematics} 2(2):161{--}184, 2007,
  \href{http://arxiv.org/abs/math.MG/0605401}{arXiv:math.MG/0605401},
  \url{http://cdm.math.ca/index.php/cdm/article/viewFile/67/43}.

\bibitem{HerJosPfe-10}
S.~Herrmann, M.~Joswig, and M.~E. Pfetsch.
\newblock {Computing the bounded subcomplex of an unbounded polyhedron},
  \href{http://arxiv.org/abs/1006.2767}{arXiv:1006.2767}.
\newblock Electronic preprint, 2010.

\bibitem{Isb-CMH-64}
J.~R. Isbell.
\newblock {Six theorems about injective metric spaces}.
\newblock {\em Comment. Math. Helv.} 39:65{--}76, 1964,
  \href{http://dx.doi.org/10.1007/BF02566944}%
{doi:10.1007/BF02566944}.

\bibitem{KarOve-BIT-88}
R.~G. Karlsson and M.~H. Overmars.
\newblock {Scanline algorithms on a grid}.
\newblock {\em BIT} 28(2):227{--}241, 1988,
  \href{http://dx.doi.org/10.1007/BF01934088}%
{doi:10.1007/BF01934088}.

\bibitem{Kar-AC-98}
A.~V. Karzanov.
\newblock {Metrics with finite sets of primitive extensions}.
\newblock {\em Annals of Combinatorics} 2(3):211{--}241, 1998,
  \href{http://dx.doi.org/10.1007/BF01608533}%
{doi:10.1007/BF01608533}.

\bibitem{MalMal-DCG-92}
S.~M. Malitz and J.~I. Malitz.
\newblock {A bounded compactness theorem for $L^1$-embeddability of metric
  spaces in the plane}.
\newblock {\em Discrete Comput. Geom.} 8(1):373{--}385, 1992,
  \href{http://dx.doi.org/10.1007/BF02293054}%
{doi:10.1007/BF02293054}.

\bibitem{MonFou-TR-82}
D.~Y. Montuno and A.~Fournier.
\newblock {Finding the $x$-$y$ convex hull of a set of $x$-$y$ polygons}.
\newblock Tech. Rep. 148, University of Toronto, 1982.

\bibitem{NicLeeLia-BIT-83}
T.~M. Nicholl, D.~T. Lee, Y.~Z. Liao, and C.~K. Wong.
\newblock {On the $X$-$Y$ convex hull of a set of $X$-$Y$ polygons}.
\newblock {\em BIT} 23:456{--}471, 1983,
  \href{http://dx.doi.org/10.1007/BF01933620}%
{doi:10.1007/BF01933620}.

\bibitem{OttSoiWoo-IS-84}
T.~Ottman, E.~Soisalon-Soisinen, and D.~Wood.
\newblock {On the definition and computation of rectilinear convex hulls}.
\newblock {\em Information Sciences} 33:157{--}171, 1984,
  \href{http://dx.doi.org/10.1016/0020-0255(84)90025-2}%
{doi:10.1016/0020-0255(84)90025-2}.

\bibitem{Sch-MS-11}
R.~E. Schwartz.
\newblock {17.3 The Gauss{--}Bonnet Theorem}.
\newblock {\em Mostly Surfaces}, pp.~209{--}210. American Mathematical Society,
  Student Mathematical Library, 2011.

\bibitem{StuYu-EJC-04}
B.~Sturmfels and J.~Yu.
\newblock {Classification of six-point metrics}.
\newblock {\em Electronic Journal of Combinatorics} 11:R44, 2004,
  \href{http://arxiv.org/abs/math.MG/0403147}{arXiv:math.MG/0403147},
  \url{http://www.combinatorics.org/Volume_11/Abstracts/v11i1r44.html}.

\bibitem{Zom-CG-10}
A.~Zomorodian.
\newblock {Fast construction of the Vietoris{--}Rips complex}.
\newblock {\em Computers and Graphics} 34(3):263{--}271, June 2010,
  \href{http://dx.doi.org/10.1016/j.cag.2010.03.007}%
{doi:10.1016/j.cag.2010.03.007}.

\end{thebibliography}

\end{document}